\newcommand\blfootnote[1]{%
  \begingroup
  \renewcommand\thefootnote{}\footnote{#1}%
  \addtocounter{footnote}{-1}%
  \endgroup
}
\title{The Arbitrarily Varying Relay Channel}
\author{\IEEEauthorblockN{Uzi Pereg and Yossef Steinberg}
\IEEEauthorblockA{Department of Electrical Engineering\\
Technion, Haifa 32000, Israel.\\
Email: {\tt uzipereg@campus.technion.ac.il}, {\tt ysteinbe@ee.technion.ac.il}
 }
}
\else\newcommand{\bibstar}[1]{}\fi
\definecolor{light-gray}{gray}{0.8}
\definecolor{dark-gray}{gray}{0.3}
\newlength{\dhatheight}
\newcommand{\bieee}{\begin{IEEEeqnarray}{rCl}}
\newcommand{\eieee}{\end{IEEEeqnarray}}
\newcommand{\prob}[1]{\Pr\left(#1\right)}
\newcommand{\given}{\mid}
\newcommand{\cprob}[2]{\Pr\left(#1\given #2\right)}
\newcommand{\E}{\mathbb{E}}
\newcommand{\var}{\mathbb{V}\mathrm{ar}}
\newcommand{\eps}{\varepsilon}
\newcommand{\norm}[1]{\left\lVert#1\right\rVert}
\newcommand{\ie}{\emph{i.e.} }
\newcommand{\eg}{\emph{e.g.} }
\newcommand{\etal}{\emph{et al.} }
\newcommand{\cf}{\emph{cf.} }
\newcommand{\xvec}{\mathbf{x}}
\newcommand{\yvec}{\mathbf{y}}
\newcommand{\zvec}{\mathbf{z}}
\newcommand{\fvec}{\mathbf{f}}
\newcommand{\svec}{\mathbf{s}}
\newcommand{\avec}{\mathbf{a}}
\newcommand{\cvec}{\mathbf{c}}
\newcommand{\vvec}{\mathbf{v}}	
\newcommand{\Xvec}{\mathbf{X}}
\newcommand{\Yvec}{\mathbf{Y}}
\newcommand{\Svec}{\mathbf{S}}
\newcommand{\tm}{\widetilde{m}}	
\newcommand{\tM}{\widetilde{M}}																				
\newcommand{\tq}{\widetilde{q}}
\newcommand{\tx}{\tilde{x}}
\newcommand{\ty}{\tilde{y}}
\newcommand{\tf}{\widetilde{f}}
\newcommand{\tg}{\widetilde{g}}
\newcommand{\gnu}{\tg}
\newcommand{\bR}{\tR}
\newcommand{\oS}{\overline{S}}
\newcommand{\oq}{\overline{q}}
\newcommand{\tR}{\widetilde{R}}
\newcommand{\ha}{\hat{a}}
\newcommand{\hj}{\hat{j}}
\newcommand{\hk}{\hat{k}}
\newcommand{\hm}{\hat{m}}
\newcommand{\hy}{\hat{y}}
\newcommand{\hq}{\widehat{q}}
\newcommand{\hgamma}{\widehat{\gamma}}
\newcommand{\hJ}{\hat{J}}
\newcommand{\hP}{\hat{P}}
\newcommand{\hM}{\hat{M}}
\newcommand{\hY}{\hat{Y}}
\newcommand{\Aset}{\mathcal{A}}
\newcommand{\Dset}{\mathcal{D}}
\newcommand{\Fset}{\mathcal{F}}
\newcommand{\Jset}{\mathcal{J}}
\newcommand{\Uset}{\mathcal{U}}
\newcommand{\Qset}{\mathcal{Q}}
\newcommand{\Sset}{\mathcal{S}}
\newcommand{\Wset}{\mathcal{W}}
\newcommand{\Xset}{\mathcal{X}}
\newcommand{\Yset}{\mathcal{Y}}
\newcommand{\Zset}{\mathcal{Z}}
\newcommand{\Eset}{\mathcal{E}}
\newcommand{\markovC}[1]{%
\begin{tikzpicture}[#1]%
\draw (0,0.3ex) -- (1ex,0.3ex);%
\draw (0.5ex,0.3ex) circle (0.2ex);
\draw[white] (0.2ex,0) -- (0.5ex,0);%
\end{tikzpicture}%
}
\newcommand{\Cbar}{\markovC{scale=2}}
\theoremstyle{remark}	\newtheorem{theorem}{Theorem}
\theoremstyle{remark}	\newtheorem{lemma}[theorem]{Lemma}
\theoremstyle{remark}	\newtheorem{coro}[theorem]{Corollary}
\theoremstyle{remark} \newtheorem{definition}{Definition}
\theoremstyle{remark} 
\theoremstyle{remark} \newtheorem{example}{Example}
\newcommand{\avc}{\Wset}																		
\newcommand{\opC}{\mathbb{C}}																
\newcommand{\inR}{\mathsf{R}}
\newcommand{\pSpace}{\mathcal{P}}														
\newcommand{\dM}{\mathsf{M}}															 	
\newcommand{\enc}{f}																				
\newcommand{\renc}{\mathrm{f}}																				
\newcommand{\dec}{g}																			 	
\newcommand{\code}{\mathscr{C}}															
\newcommand{\gcode}{\mathscr{C}^{\,\Gamma}}									
\newcommand{\cerr}{P_{e|s^n}^{(n)}}													
\newcommand{\err}{P_e^{(n)}}															
\newcommand{\plimit}{\Omega}																			
\newcommand{\tset}{\Aset^{\delta}}													
\newcommand{\qn}{q}
\newcommand{\tQ}{\hat{\Qset}_n}														
\newcommand{\encn}{f^n}																			
\newcommand{\rstarC}{																			  
\, \hspace{-0.3cm} \text{ $$ \mbox{  
\hspace{-0.1cm} 
\small $\star$   
} $$ }
\hspace{-0.25cm}}
\newcommand{\rCav}{\opC^{\rstarC}\hspace{-0.1cm}(\avc)}
\newcommand{\rc}{W_{Y,Y_1|X,X_1,S}}													
\newcommand{\nRC}{W_{Y^n,Y_1^n|X^n,X_1^n,S^n}} 									
\newcommand{\tRYset}{\mathcal{L}}
\newcommand{\RYcompound}{\tRYset^\Qset} 
\newcommand{\RYcompoundP}{\tRYset^{\pSpace(\Sset)}} 
\newcommand{\avrc}{\tRYset}																			
\newcommand{\RYopC}{\mathbb{C}}																
\newcommand{\RYrCcompound}{\RYopC^{\rstarC}\hspace{-0.1cm}(\RYcompound)}
\newcommand{\RYrCcompoundP}{\RYopC^{\rstarC}\hspace{-0.1cm}(\RYcompoundP)}
\newcommand{\RYrCav}{\RYopC^{\rstarC}\hspace{-0.1cm}(\avrc)}  
\newcommand{\RYrICav}{\inR_{CS}^{\rstarC}(\avrc)} 
\newcommand{\RYCcompound}{\RYopC(\RYcompound)}
\newcommand{\RYCavc}{\RYopC(\avrc)}
\newcommand{\RYICcompound}{\inR_{CS}(\RYcompound)}
\newcommand{\RYdIRcompound}{\inR_{DF}(\RYcompound)} 
\newcommand{\RYcIRcompound}{\inR_{comp}(\RYcompound)}
\newcommand{\RYdIRavc}{\inR_{DF}^{\rstarC}(\avrc)}
\newcommand{\davrc}{\avrc_2}
\newcommand{\dRYdIRavc}{\inR_{DF}^{\rstarC}(\davrc)}
\begin{document}
\maketitle

{}

\begin{abstract} 
We study the arbitrarily varying relay channel, and  
  establish the cutset bound and partial decode-forward bound  on the random code capacity. 
We further determine the random code capacity for special cases. Then, we consider conditions under which the deterministic code capacity is determined as well.
\end{abstract}

\begin{IEEEkeywords}
Arbitrarily varying channel, relay channel, decode-forward, 
 Markov block code, minimax theorem,  deterministic code,  random code, symmetrizability. 
\end{IEEEkeywords}

\blfootnote{
This work was supported by the Israel Science Foundation (grant No. 1285/16).
}

\section{Introduction}

The relay channel was first introduced by van der Meulen \cite{vanderMeulen:71p}
 to describe  point to point communication with the help of a relay, which
 receives a noisy version of the transmitter signal and transmits a signal of its own to the destination receiver, in a strictly causal manner. The capacity of the relay channel is not known in general, however,
 Cover and El Gamal established
the cutset upper bound, the  decode-forward lower bound, and the partial decode-forward lower bound 
\cite{CoverElGamal:79p}. 
It was also shown in \cite{CoverElGamal:79p} that for the reversely degraded relay channel, direct transmission is capacity achieving.  For the degraded relay channel, the decode-forward  bound and the cutset bound coincide, thus characterizing the capacity for this model \cite{CoverElGamal:79p}.  In general, the partial decode-forward lower bound is tighter than both direct transmission and  decode-forward lower bounds.
 El Gamal and Zahedi \cite{ElGamalZahedi:05p} determined the capacity of the relay channel with orthogonal sender components, by showing that the partial decode-forward bound and cutset bound coincide.  

 In practice, the channel statistics are not necessarily known in exact, and they may even change over time.
The arbitrarily varying channel (AVC) is an appropriate model to describe such a situation 
 \cite{BBT:60p}. Considering the AVC without a relay, 
Blackwell \etal  determined  the random code capacity  \cite{BBT:60p}, \ie the capacity achieved by stochastic-encoder stochastic-decoder coding schemes with common randomness. It was also demonstrated in  \cite{BBT:60p}  that the random code capacity is not necessarily  achievable using deterministic codes. 
  A well-known result by Ahlswede \cite{Ahlswede:78p} is the dichotomy property presented by the AVC.
	Specifically,  the deterministic code capacity either equals the random code capacity or else, it is zero. 
Subsequently, Ericson \cite{Ericson:85p} and Csisz{\'{a}}r and Narayan \cite{CsiszarNarayan:88p}
 established a simple single-letter condition, namely non-symmetrizability, which is both necessary and sufficient for the capacity to be positive. 

In this work, we study the arbitrarily varying relay channel (AVRC), which combines the previous models, \ie the relay channel and the AVC. 
In the analysis, we incorporate the block Markov coding schemes of \cite{CoverElGamal:79p} in Ahlswede's  Robustification and Elimination Techniques \cite{Ahlswede:78p,
Ahlswede:86p}. 
%
We establish the cutset upper bound and the full/partial decode-forward lower bound 
 on the random code capacity of the AVRC. 
We determine the random code capacity for special cases of the degraded AVRC, the reversely degraded AVRC, and the AVRC with orthogonal sender components.  Then, we give extended symmetrizability conditions under which the deterministic code capacity coincides with the random code capacity. 
We show by example that 
the deterministic code capacity can be strictly lower than the random code capacity of the AVRC.
We also give generalized symmetrizability conditions under which the deterministic code capacity coincides with the random code capacity, and conditions under which it is zero.  

\section{Definitions}
\label{sec:def}
\subsection{Notation}
\label{sec:notation}
We use the following notation conventions throughout. 
Calligraphic letters $\Xset,\Sset,\Yset,...$ are used for finite sets.
Lowercase letters $x,s,y,\ldots$  stand for constants and values of random variables, and uppercase letters $X,S,Y,\ldots$ stand for random variables.  
 The distribution of a random variable $X$ is specified by a probability mass function (pmf) 
	$P_X(x)=p(x)$ over a finite set $\Xset$. The set of all pmfs over $\Xset$ is denoted by $\pSpace(\Xset)$. 
 We use $x^j=(x_1,x_{2},\ldots,x_j)$ to denote  a sequence of letters from $\Xset$. 
 A random sequence $X^n$ and its distribution $P_{X^n}(x^n)=p(x^n)$ are defined accordingly. 
For a pair of integers $i$ and $j$, $1\leq i\leq j$, we define the discrete interval $[i:j]=\{i,i+1,\ldots,j \}$. The notation $\xvec=(x_1,x_{2},\ldots,x_n)$ is used 
when it is understood from the context that the length of the sequence is $n$, and  the $\ell^2$-norm of $\xvec$ is denoted  by $\norm{\xvec}$.

\subsection{Channel Description}
A state-dependent discrete memoryless relay channel 
$(\Xset,\Xset_1,\Sset,\rc,\Yset,\Yset_1)$ consists of five sets, $\Xset$, $\Xset_1$, $\Sset$, $\Yset$ and $\Yset_1$, and a collection of conditional pmfs $\rc$.  The sets
 stand for the input alphabet, the relay transmission alphabet, the state alphabet, the output alphabet, and the relay input alphabet, respectively.  The alphabets are assumed to be finite, unless explicitly said otherwise.
%
 The channel is memoryless without feedback, and therefore  
\begin{align}
W_{Y^n,Y_1^n|X^n,X_1^n,S^n}(y^n,y_1^n|x^n,x_1^n,s^n)= \prod_{i=1}^n \rc(y_{i},y_{1,i}|x_i,x_{1,i},s_i) \,.
 \end{align}
Communication over a relay channel is depicted in Figure~\ref{fig:RC}.
Following \cite{SigurjonssonKim:05c}, a relay channel $\rc$ is called degraded if the channel can be expressed as
\begin{align}
\label{eq:RYdegraded}
\rc(y,y_1|x,x_1,s)=W_{Y_1|X,X_1,S}(y_1|x,x_1,s)p(y|y_1,x_1,s) \,,
\end{align}
and it is called reversely degraded if
\begin{align}
\label{eq:RYdegradedRev}
\rc(y,y_1|x,x_1,s)=W_{Y|X,X_1,S}(y|x,x_1,s)p(y_1|y,x_1,s) \,.
\end{align}
 

The \emph{arbitrarily varying relay channel} (AVRC) is a discrete memoryless relay channel $(\Xset,\Xset_1,\Sset,\rc,\Yset,\Yset_1)$  with a state sequence of unknown distribution,  not necessarily independent nor stationary. That is, $S^n\sim \qn(s^n)$ with an unknown joint pmf $\qn(s^n)$ over $\Sset^n$.
In particular, $\qn(s^n)$ can give mass $1$ to some state sequence $s^n$. 
We use the shorthand notation $\avrc=\{\rc\}$ for the AVRC, where the 
alphabets are understood from the context.

To analyze the AVRC, we consider the \emph{compound  relay channel}. 
Different models of  compound relay channels have been considered in the literature 
\cite{SimeoneGunduzShamai:09c,BehboodiPiantanida:09c}.
Here, we define the compound  relay channel as a discrete memoryless relay channel $(\Xset,\Xset_1,\Sset,$ $\rc,\Yset,\Yset_1)$ with a discrete memoryless state, where the state distribution $q(s)$ is not known in exact, but rather belongs to a family of distributions $\Qset$, with $\Qset\subseteq \pSpace(\Sset)$. That is,  $S^n\sim\prod_{i=1}^n q(s_i)$, with an unknown pmf $q\in\Qset$ over $\Sset$.
We use the shorthand notation $\RYcompound$ for the compound relay channel,
where the transition probability $\rc$ and the alphabets are understood from the context. 

In the analysis, we also use the following model. Suppose that the user transmits $B>0$ blocks of length $n$, and the jammer is entitled to use a different state distribution $q_b(s)\in\Qset$  for every block  $b\in [1:B]$, while the encoder, relay and  receiver are aware of this jamming scheme. In other words, every block is governed by a different memoryless state. 
 We refer to this channel as the block-compound relay channel, denoted by $\avrc^{\Qset\times B}$.
Although this is a toy model, it is a useful tool for the analysis of the AVRC.

\begin{center}
\begin{figure}[htb]
        \centering
        \includegraphics[scale=0.65,trim={1cm 0 0 0},clip] 
				{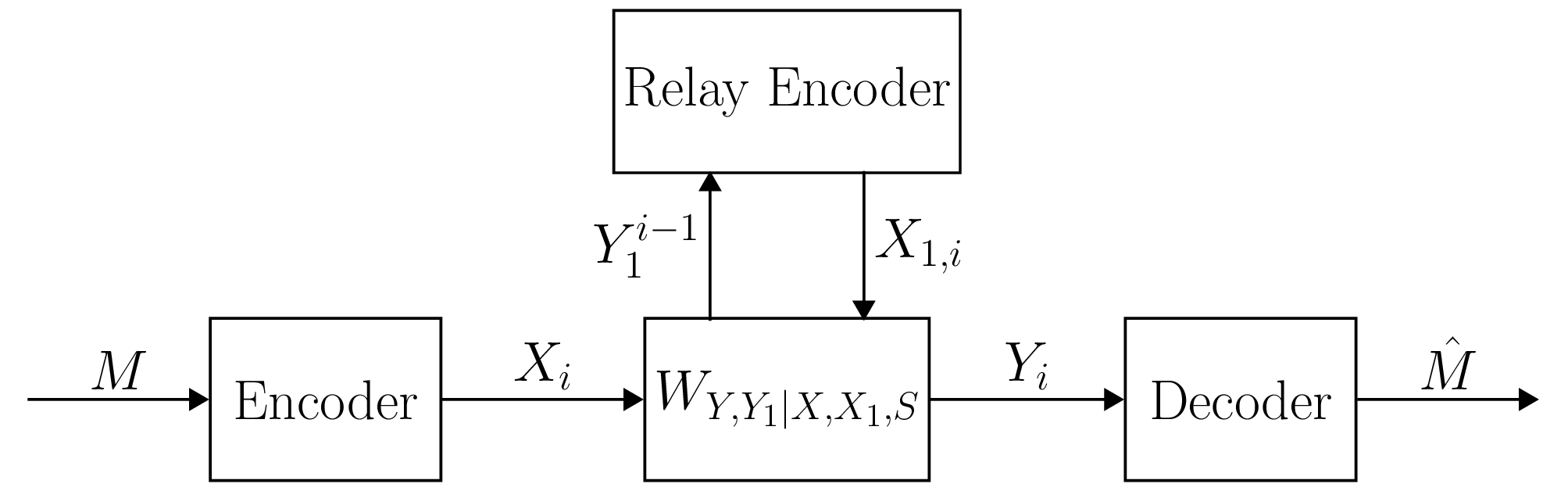}
        
\caption{Communication over the AVRC $\avrc=\{\rc\}$. 	 Given a message $M$, 
 the encoder transmits $X^n=\enc(M 
)$. 
At time $i\in [1:n]$, the relay transmits $X_{1,i}$ based on all the symbols of the past $Y_1^{i-1}$ and then receives a new symbol $Y_{1,i}$. The decoder receives the output sequence $Y^n$ and  finds an estimate of the message $\;\hM=g(Y^n)$. 
  }
\label{fig:RC}
\end{figure}
\end{center}

\subsection{Coding}
\label{subsec:RYcoding}
We introduce some preliminary definitions, starting with the definitions of a deterministic code and a random code for the AVRC $\avrc$. 
Note that in general, the term `code', unless mentioned otherwise, refers to a deterministic code.  

\begin{definition}[A code, an achievable rate and capacity]
\label{def:RYcapacity}
A $(2^{nR},n)$ code for the AVRC $\avrc$ 
 consists of the following;  a  message set $[1:2^{nR}]$,  where it is assumed throughout that $2^{nR}$ is an integer, 
an encoder $\enc:[1:2^{nR}]\rightarrow\Xset^n$,
a sequence of $n$ relaying functions $\enc_{1,i}:\Yset_1^{i-1}\rightarrow\Xset_{1,i}$,
$i\in [1:n]$, and a  decoding function $\dec: \Yset^n\rightarrow [1:2^{nR}]$.

 Given a message $m\in [1:2^{nR}]$, 
 the encoder transmits $x^n=\enc(m 
)$. 
At time $i\in [1:n]$, the relay transmits $x_{1,i}=\enc_{1,i}(y_1^{i-1})$ and then receives  $y_{1,i}$. The relay codeword is  given by $x_1^n=\encn_1(y_1^n)\triangleq \left( \enc_{1,i}(y_1^{i-1}) \right)_{i=1}^n$.
The decoder receives the output sequence $y^n$ and  finds an estimate of the message $\hm=g(y^n)$
(see Figure~\ref{fig:RC}). 
We denote the code by $\code=\left(\enc(\cdot
),\encn_1(\cdot),\dec(\cdot) \right)$.
 Define the conditional probability of error of the code $\code$ given a state sequence $s^n\in\Sset^n$ by  
\begin{align}
\label{eq:RYcerr}
\cerr(\code)=
\frac{1}{2^{nR}}\sum_{m=1}^{2^ {nR}}
\sum_{ 
\substack{
(y^n,y_1^n)\,:\; \dec(y^n)\neq m
}
} 
 \left[ \prod_{i=1}^n \rc(y_i,y_{1,i}|\enc_i(m
),f_{1,i}(y_1^{i-1}),s_i) \right] \,.
\end{align}
Now, define the average probability of error of $\code$ for some distribution $\qn(s^n)\in\pSpace(\Sset^n)$, 
\begin{align}
\err(\qn,\code)
=\sum_{s^n\in\Sset^n} \qn(s^n)\cdot\cerr(\code) \,.
\end{align}
Observe that $ \err(\qn,\code)$ is linear  in $q$, and thus continuous. 
We say that $\code$ is a 
$(2^{nR},n,\eps)$ code for the AVRC $\avrc$ if it further satisfies 
\begin{align}
\label{eq:RYerr}
 \err(\qn,\code)\leq \eps \,,\quad\text{for all $\qn(s^n)\in\pSpace(\Sset^n)\,$.} 
\end{align} 
A rate $R$ is called achievable if for every $\eps>0$ and sufficiently large $n$, there exists a  $(2^{nR},n,\eps)$ code. The operational capacity is defined as the supremum of the achievable rates and it is denoted by $\RYCavc$. 
 We use the term `capacity' referring to this operational meaning, and in some places we call it the deterministic code capacity in order to emphasize that achievability is measured with respect to  deterministic codes.  
\end{definition} 

We proceed now to define the parallel quantities when using stochastic-encoders stochastic-decoder triplets with common randomness.
The codes formed by these triplets are referred to as random codes. 

\begin{definition}[Random code]
\label{def:RYcorrC} 
A $(2^{nR},n)$ random code for the AVRC $\avrc$ consists of a collection of 
$(2^{nR},n)$ codes $\{\code_{\gamma}=(\enc_\gamma,\encn_{1,\gamma},\dec_{\gamma})\}_{\gamma\in\Gamma}$, along with a probability distribution $\mu(\gamma)$ over the code collection $\Gamma$. 
We denote such a code by $\gcode=(\mu,\Gamma,\{\code_{\gamma}\}_{\gamma\in\Gamma})$.
Analogously to the deterministic case,  a $(2^{nR},n,\eps)$ random code has the additional requirement
\begin{align}
 \err(\qn,\gcode)=\sum_{\gamma\in\Gamma} \mu(\gamma)\err(q,
\code_\gamma)\leq \eps \,,\;\text{for all $\qn(s^n)\in\pSpace(\Sset^n)$} & \,. \qquad
\end{align}
The capacity achieved by random codes is denoted by $\RYrCav$, and it 
 is referred to as the random code capacity.
\end{definition}

\section{Main Results -- General AVRC}
\label{sec:RYres}
We present our results on the compound relay channel and the AVRC. 

\subsection{The Compound Relay Channel} 
\label{sec:RYcompound}
 We establish the cutset upper bound and the partial decode-forward lower bound for the compound relay channel. 
 Consider a given compound relay channel $\RYcompound$. 
Let 
\begin{align}  
\label{eq:RYICcompound} 
\RYICcompound \triangleq& \inf_{q\in\Qset}  \max_{p(x,x_1)} 
\min \left\{ I_q(X,X_1;Y) \,,\; I_q(X;Y,Y_1|X_1)
\right\} \,,
\intertext{and}
\RYdIRcompound\triangleq&
\max_{p(u,x,x_1)} 
\min \Big\{ \inf_{q\in\Qset} I_q(U,X_1;Y)+ \inf_{q\in\Qset} I_q(X;Y|X_1,U) \,,\; \nonumber\\&
\qquad\qquad\qquad\inf_{q\in\Qset} I_{q}(U;Y_1|X_1)+ \inf_{q\in\Qset} I_q(X;Y|X_1,U)
\Big\} \,,
\label{eq:RYdIRcompound}
\end{align}
where the subscripts `$CS$' and `$DF$' stand for `cutset' and `decode-forward', respectively.
 
\begin{lemma}
\label{lemm:RYcompoundDF}
The capacity of the compound relay channel $\RYcompound$ 
 is bounded by
\begin{align}
 &\RYCcompound \geq \RYdIRcompound \,, \\
 &\RYrCcompound \leq \RYICcompound \,.
\end{align}
Specifically, if $R<\RYdIRcompound$, then there exists a $(2^{nR},n,e^{-an})$ block Markov code over $\RYcompound$ for 
 sufficiently large $n$ and some $a>0$.
\end{lemma}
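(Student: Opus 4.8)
The plan is to establish the two bounds separately. The upper bound $\RYrCcompound\le\RYICcompound$ is a cutset converse applied to each channel in the family: fix any $q\in\Qset$; a $(2^{nR},n,\eps)$ random code for $\RYcompound$ is in particular a $(2^{nR},n,\eps)$ random code for the single memoryless relay channel with transition law $\sum_{s}q(s)\,\rc(\cdot\given x,x_1,s)$. I would run the standard cutset converse of Cover and El Gamal on this channel --- conditioning on the common randomness $\gamma$, writing $nR=H(M\given\gamma)=I(M;Y^n\given\gamma)+H(M\given Y^n,\gamma)$, applying Fano's inequality, and single-letterizing the two cut expressions $I_q(X^n,X_1^n;Y^n)$ and $I_q(X^n;Y^n,Y_1^n\given X_1^n)$ with a uniform time-sharing index (the causal dependence of $X_{1,i}$ on $Y_1^{i-1}$ being handled as in \cite{CoverElGamal:79p}). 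Concavity of $p(x,x_1)\mapsto I_q(X,X_1;Y)$ and of $p(x,x_1)\mapsto I_q(X;Y,Y_1\given X_1)$ then lets me replace the time-averaged, $\gamma$-averaged input pmf by a single maximizer, giving $R\le\max_{p(x,x_1)}\min\{I_q(X,X_1;Y),\,I_q(X;Y,Y_1\given X_1)\}+\delta_n$; taking the infimum over $q\in\Qset$ and letting $n\to\infty$ yields the bound.

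For the lower bound I would use block-Markov partial decode-forward with a \emph{successive}-decoding rule at the destination. Fix $p(u,x,x_1)$ and $R<\RYdIRcompound$ at this pmf. Since $\RYdIRcompound=\max_{p}\big(\min\{\inf_q I_q(U,X_1;Y),\ \inf_q I_q(U;Y_1\given X_1)\}+\inf_q I_q(X;Y\given X_1,U)\big)$, split $R=R'+R''$ with $R'<\min\{\inf_q I_q(U,X_1;Y),\ \inf_q I_q(U;Y_1\given X_1)\}-2\delta$ and $R''<\inf_q I_q(X;Y\given X_1,U)-2\delta$ for a small $\delta>0$. Transmit over $B$ blocks of length $n$; the message in block $b$ is $(w_b,v_b)\in[1:2^{nR'}]\times[1:2^{nR''}]$, and the codebook is generated in layers, $x_1^n(w')\sim\prod_i p(x_{1,i})$, then $u^n(w\given w')\sim\prod_i p(u_i\given x_{1,i}(w'))$, then $x^n(v\given w,w')\sim\prod_i p(x_i\given u_i(w\given w'),x_{1,i}(w'))$. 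In block $b$ the relay, holding $w_{b-1}$, decodes the unique $w_b$ with $(u^n(w_b\given w_{b-1}),x_1^n(w_{b-1}),y_1^{(b)})$ jointly typical; the destination decodes backward, and in each block it first decodes $w_{b-1}$ from $(x_1^n(w_{b-1}),u^n(w_b\given w_{b-1}),y^{(b)})$ treating $X$ and the state as noise, and only then decodes $v_b$ given $(w_{b-1},w_b)$. Because one (unknown) $q$ governs all blocks, the relay step requires $R'<I_q(U;Y_1\given X_1)$, the first destination step requires $R'<I_q(U,X_1;Y)$, and the second requires $R''<I_q(X;Y\given X_1,U)$; demanding these for every $q\in\Qset$ is exactly $R<\RYdIRcompound$, and it is precisely the two-step structure of the destination decoder that produces the separate infima appearing in the definition of $\RYdIRcompound$.

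The error analysis proceeds event by event. I would fix a finite net $\Qset_n\subseteq\Qset$ of mesh $\sim e^{-an}/n$ and use a decoder that accepts an index when the induced sequences are jointly typical with the observed output for \emph{some} $q\in\Qset_n$. For each fixed $q\in\Qset_n$ the layered generation makes every incorrect index produce a codeword that is independent of the relevant output given the correctly decoded lower layers, so its joint-typicality probability is at most $2^{-n(I_q(\cdot)-\delta')}$ for small $\delta'$; the union over the $2^{nR'}$ or $2^{nR''}$ wrong indices, over the $|\Qset_n|$ net points, and over the $B$ blocks stays below $e^{-a'n}$ once the slack $\delta$ exceeds $\delta'$ and $a$ is small relative to $a'$ (the net has size only $\mathrm{poly}(n)\,e^{a|\Sset|n}$). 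For the true $q\in\Qset$, the linearity of $\err(q^n,\code)$ in $q^n$ noted in Definition~\ref{def:RYcapacity} --- hence Lipschitzness in $q$ with constant at most $n$ --- together with the choice of mesh guarantees that the nearest net point is decoded correctly up to an additional $e^{-an}$. Averaging over the codebook and applying Markov's inequality then produces a deterministic block-Markov code with $\err(q,\code)\le e^{-an}$ for all $q\in\Qset$; appending one fixed dummy block turns the rate into $\tfrac{B-1}{B}R$, which tends to $R$ as $B\to\infty$. The step I expect to be the main obstacle is keeping the entire analysis at the level of \emph{exponentially} small error (rather than merely vanishing error) simultaneously over the whole uncertainty set $\Qset$ and over all blocks --- this exponential rate is what the later Elimination Technique will need --- while still extracting the single-letter mutual-information exponents and not losing rate below $\RYdIRcompound$; the remainder is a fairly direct adaptation of the Cover--El Gamal partial-decode-forward construction with the successive-decoding modification at the receiver.
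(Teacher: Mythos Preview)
Your approach is correct and follows essentially the same route as the paper: block-Markov partial decode-forward with a layered codebook, backward two-step decoding at the destination (first $w_{b-1}$, then $v_b$), a universal decoder that tests joint typicality against a discretization of $\Qset$, and the cutset converse applied pointwise in $q\in\Qset$. The one substantive implementation difference is the discretization. You take an $(e^{-an}/n)$-mesh net $\Qset_n\subseteq\Qset$ of size $\mathrm{poly}(n)\,e^{a|\Sset|n}$ and close the gap from net points to arbitrary $q\in\Qset$ via the Lipschitz bound $|\err(q^n,\code)-\err(q'^n,\code)|\le n\|q-q'\|_1$; the paper instead lets the decoder search over the set $\hat{\Qset}_n$ of \emph{types} lying within $\delta_1$ of some $q\in\Qset$, which has size at most $(n+1)^{|\Sset|}$. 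This buys two simplifications: the union over the search set costs only a polynomial factor and does not erode the error exponent, and no Lipschitz step is needed at all, since for any true $q^*\in\Qset$ the observed sequences are with high probability $(\delta/2)$-typical for $P^{q^*}$ and hence automatically $\delta$-typical for some nearby type in $\hat{\Qset}_n$. The wrong-index events are then bounded exactly as you indicate, with the uniform-in-$q^*$ estimate coming from the type inequality $P^{q^*}_{Y_1^n|X_1^n}(y_1^n|x_1^n)\le 2^{-nH(\hat P_{x_1^n,y_1^n})+nH(\hat P_{x_1^n})}\le 2^{-n(H_{q'}(Y_1|X_1)-\eps)}$ whenever $(x_1^n,y_1^n)$ is typical for $P^{q'}$. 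Your route works, but the paper's choice of a polynomial-size type class is what makes the exponential error bound drop out cleanly without having to balance the mesh parameter $a$ against the mutual-information slack.
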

The proof of Lemma~\ref{lemm:RYcompoundDF} is given in Appendix~\ref{app:RYcompoundDF}.
Observe that taking $U=\emptyset$  in (\ref{eq:RYdIRcompound}) gives the direct transmission lower bound,
\begin{align}
\RYCcompound\geq& \RYdIRcompound \geq 
\max_{p(x,x_1)} \inf_{q\in\Qset} I_q(X;Y|X_1) \,.
\label{eq:RYcompoundDirectTran}
\intertext{
Taking $U=X$  in (\ref{eq:RYdIRcompound}) results in a full decode-forward lower bound,
}
\RYCcompound\geq& \RYdIRcompound \geq \max_{p(x,x_1)} \inf_{q\in\Qset}  \min \left\{ I_q(X,X_1;Y) \,,\; 
 I_{q}(X;Y_1|X_1)
\right\} \,.
\label{eq:RYcompoundFullDF}
\end{align}
This yields the following corollary.
\begin{coro}
\label{coro:RYcompoundDeg}
Let $\RYcompound$ be a compound relay channel, where $\Qset$ is a compact convex set.
\begin{enumerate}[1)]
\item
If $\rc$ is reversely degraded, such that 
\begin{align}
\rc(y,y_1|x,x_1,s)=W_{Y|X,X_1}(y|x,x_1)W_{Y_1|Y,X_1,S}(y_1|y,x_1,s) \,,
\end{align}
 then
\begin{align}
\RYCcompound= \RYdIRcompound=\RYICcompound= \min_{q\in\Qset} \max_{p(x, x_1)}  I_q(X;Y|X_1) \,.
\end{align}
\item
If $\rc$ is degraded, such that 
\begin{align}
\rc(y,y_1|x,x_1,s)=W_{Y_1|X,X_1}(y_1|x,x_1)W_{Y|Y_1,X_1,S}(y|y_1,x_1,s) \,,
\end{align}
 then 
\begin{align}
\RYCcompound= \RYdIRcompound=\RYICcompound=  \max_{p(x,x_1)}   \min \left\{ \min_{q\in\Qset} I_q(X,X_1;Y) \,,\; 
 I(X;Y_1|X_1)
\right\} \,.
\end{align}
\end{enumerate}
\end{coro}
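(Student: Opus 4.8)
The plan is to squeeze all four quantities between the partial decode‑forward lower bound and the cutset upper bound of Lemma~\ref{lemm:RYcompoundDF}, and then to show that for a reversely degraded (resp.\ degraded) channel these two bounds coincide and equal the claimed single‑letter expression. Since any deterministic code is in particular a random code, we always have $\RYdIRcompound\le\RYCcompound\le\RYrCcompound\le\RYICcompound$, the outer inequalities being Lemma~\ref{lemm:RYcompoundDF} and the middle one being trivial. It therefore suffices, in each case, to evaluate $\RYICcompound$ and to exhibit a matching achievable lower bound coming from $\RYdIRcompound$.

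For the reversely degraded case I would first simplify the cutset bound \eqref{eq:RYICcompound}. Under $\rc(y,y_1|x,x_1,s)=W_{Y|X,X_1}(y|x,x_1)W_{Y_1|Y,X_1,S}(y_1|y,x_1,s)$ the output $Y$ is conditionally independent of $S$ given $(X,X_1)$, so the induced law $p(y|x,x_1)=W_{Y|X,X_1}(y|x,x_1)$ does not depend on $q$; moreover $p(y_1|x,x_1,y)=\sum_s q(s)W_{Y_1|Y,X_1,S}(y_1|y,x_1,s)$ does not depend on $x$, i.e.\ $X-(X_1,Y)-Y_1$, so by the chain rule $I_q(X;Y,Y_1|X_1)=I_q(X;Y|X_1)=I(X;Y|X_1)$. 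Since $I_q(X,X_1;Y)=I(X,X_1;Y)\ge I(X;Y|X_1)$, the inner minimum in \eqref{eq:RYICcompound} equals $I(X;Y|X_1)$, hence $\RYICcompound=\max_{p(x,x_1)}I(X;Y|X_1)=\min_{q\in\Qset}\max_{p(x,x_1)}I_q(X;Y|X_1)$. On the other side, the direct‑transmission bound \eqref{eq:RYcompoundDirectTran} gives $\RYdIRcompound\ge\max_{p(x,x_1)}\inf_{q\in\Qset}I_q(X;Y|X_1)=\max_{p(x,x_1)}I(X;Y|X_1)$, and the sandwich closes.

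For the degraded case the same manipulation, now using $S-(X,X_1)-Y_1$ and $X-(X_1,Y_1)-Y$, gives $I_q(X;Y,Y_1|X_1)=I_q(X;Y_1|X_1)=I(X;Y_1|X_1)$ independently of $q$, so $\RYICcompound=\inf_{q\in\Qset}\max_{p(x,x_1)}\min\{I_q(X,X_1;Y),\,I(X;Y_1|X_1)\}$. The decisive step is then to interchange $\inf_q$ and $\max_p$. Put $\phi(p,q)=\min\{I_q(X,X_1;Y),\,I(X;Y_1|X_1)\}$. For fixed $q$, $I_q(X,X_1;Y)$ is concave in $p(x,x_1)$ (mutual information is concave in the input for a fixed channel), and $I(X;Y_1|X_1)$ is also concave in the joint $p(x,x_1)$ — this follows by expressing $p(x|x_1)$ for the mixture $\lambda p_1+(1-\lambda)p_2$ as a convex combination of $p_1(x|x_1),p_2(x|x_1)$ with weights $\lambda p_1(x_1)/p(x_1)$, applying concavity of $I(X;Y_1|X_1=x_1)$ in the input, and summing over $x_1$ — so $\phi(\cdot,q)$ is concave and continuous. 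For fixed $p$ the effective channel $(X,X_1)\to Y$ is affine in $q$, hence $I_q(X,X_1;Y)$ is convex in $q$, and $\phi(p,\cdot)$, a minimum of a convex function and a constant, is quasi‑convex and continuous in $q$. Since $\Qset$ and $\pSpace(\Xset\times\Xset_1)$ are compact and convex, a minimax theorem (Sion's) yields $\RYICcompound=\max_{p(x,x_1)}\inf_{q\in\Qset}\phi(p,q)=\max_{p(x,x_1)}\min\{\min_{q\in\Qset}I_q(X,X_1;Y),\,I(X;Y_1|X_1)\}$, which is exactly the claimed expression; and by the full decode‑forward bound \eqref{eq:RYcompoundFullDF} (again using that $I_q(X;Y_1|X_1)$ is $q$‑free) this equals $\max_{p(x,x_1)}\inf_{q}\min\{I_q(X,X_1;Y),I_q(X;Y_1|X_1)\}\le\RYdIRcompound$, so the sandwich closes once more.

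The step I expect to be the main obstacle is the minimax interchange in part~2): one must verify all the hypotheses carefully, in particular the concavity of the conditional mutual information $I(X;Y_1|X_1)$ in the \emph{joint} input distribution $p(x,x_1)$ (not merely in $p(x|x_1)$ for fixed $p(x_1)$), and use compactness of $\Qset$ together with continuity of the information quantities to turn every $\inf_q$ into an attained $\min_q$. The degradedness reductions themselves are routine applications of the chain rule once the relevant Markov relations $X-(X_1,Y)-Y_1$ and $X-(X_1,Y_1)-Y$ are identified, and everything else is the sandwich argument built on Lemma~\ref{lemm:RYcompoundDF}.
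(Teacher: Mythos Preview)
Your proposal is correct and follows essentially the same route as the paper: sandwich via Lemma~\ref{lemm:RYcompoundDF}, reduce the cutset term using the relevant Markov chain ($I_q(X;Y,Y_1|X_1)=I_q(X;Y|X_1)$ in part~1, $=I(X;Y_1|X_1)$ in part~2), and close the gap by the direct-transmission bound \eqref{eq:RYcompoundDirectTran} (resp.\ the full decode-forward bound \eqref{eq:RYcompoundFullDF}) together with Sion's minimax theorem. Your treatment is in fact slightly more detailed than the paper's: in part~1 you observe that $I_q(X;Y|X_1)$ is $q$-free (since $W_{Y|X,X_1}$ carries no state), which makes the minimax step trivial there, and in part~2 you spell out the concavity of $I(X;Y_1|X_1)$ in the joint law $p(x,x_1)$ and the quasi-convexity of $\min\{I_q(X,X_1;Y),\text{const}\}$ in $q$, which the paper simply asserts with a reference.
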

The proof of Corollary~\ref{coro:RYcompoundDeg} is given in Appendix~\ref{app:RYcompoundDeg}. 
%
Part 1 follows from the direct transmission and cutset bounds, (\ref{eq:RYcompoundDirectTran}) and  (\ref{eq:RYICcompound}), respectively, while  part 2 is based on the full decode-forward and cutset bounds, (\ref{eq:RYcompoundFullDF}) and  (\ref{eq:RYICcompound}), respectively. 
The following corollary is a direct consequence of Lemma~\ref{lemm:RYcompoundDF}  and it is  significant for the random code analysis of the AVRC.
\begin{coro}
\label{coro:RYcompoundDFb}
The capacity of the block-compound relay channel $\avrc^{\Qset\times B}$ 
 is bounded by
\begin{align}
 &\opC(\avrc^{\Qset\times B} ) \geq \RYdIRcompound \,, \label{eq:RYcDFb} \\
 &\opC^{\rstarC}(\avrc^{\Qset\times B}) \leq \RYICcompound \,. \label{eq:RYcCSb}
\end{align} 
Specifically, if $R<\RYdIRcompound$, then there exists a $(2^{nR},n,e^{-an})$ block Markov code over $\avrc^{\Qset\times B}$ for 
 sufficiently large $n$ and some $a>0$.
\end{coro}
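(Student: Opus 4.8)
The plan is to obtain both bounds directly from Lemma~\ref{lemm:RYcompoundDF} with essentially no new work, by exploiting the fact that the achievability part of that lemma is itself established via a $B$-block Markov code, and that each information quantity in $\RYdIRcompound$ is minimized over the whole family $\Qset$ \emph{term by term}.

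For the lower bound $\opC(\avrc^{\Qset\times B})\geq \RYdIRcompound$, I would revisit the block Markov code constructed in the proof of Lemma~\ref{lemm:RYcompoundDF}. That code transmits over $B$ consecutive blocks of length $n$, with the relay decoding the previous block's description and re-encoding it in the current block. The crucial observation is that the block-$b$ error events (relay decoding of $U$ from $Y_1$, destination decoding of $(U,X_1)$ from $Y$, and destination decoding of $X$ given $(U,X_1)$) depend only on the \emph{memoryless} state governing block $b$. Because the infima over $\Qset$ in $\RYdIRcompound$ are placed inside the $\min$ and inside the sum, the same threshold $R<\RYdIRcompound$ suffices even when block $b$ is governed by an arbitrary $q_b\in\Qset$ and $q_{b'}\neq q_b$ for $b'\neq b$. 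Each block's error probability is then bounded by $e^{-a'n}$ for some $a'>0$ exactly as in Lemma~\ref{lemm:RYcompoundDF}, and a union bound over the $B$ blocks (with $B$ growing at most polynomially in $n$) gives an overall probability of error of at most $B\cdot e^{-a'n}\leq e^{-an}$ for a slightly smaller $a>0$ and $n$ large, uniformly over all $(q_1,\ldots,q_B)\in\Qset^B$. This is precisely a $(2^{nR},n,e^{-an})$ block Markov code over $\avrc^{\Qset\times B}$.

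For the upper bound $\opC^{\rstarC}(\avrc^{\Qset\times B})\leq \RYICcompound$, I would use an inclusion argument on the jammer side. Any random code that is reliable over $\avrc^{\Qset\times B}$ is in particular reliable against every constant jamming scheme $q_1=\cdots=q_B=q$ with $q\in\Qset$; but under such schemes $\avrc^{\Qset\times B}$ coincides with the compound relay channel $\RYcompound$ used with codes of length $Bn$ (the state becomes i.i.d.\ $\sim q$ over all $Bn$ coordinates). Hence a reliable random code over $\avrc^{\Qset\times B}$ is a reliable random code over $\RYcompound$ of the same rate, so $\opC^{\rstarC}(\avrc^{\Qset\times B})\leq\RYrCcompound$. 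Combining with the converse part of Lemma~\ref{lemm:RYcompoundDF}, namely $\RYrCcompound\leq\RYICcompound$, yields the claim; since capacity is asymptotic, the particular blocklength convention for $\RYcompound$ is immaterial.

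The only point requiring care, and hence the main obstacle, is the achievability error analysis across \emph{adjacent} blocks: in block Markov coding the relay's transmission in block $b$ is a function of what it decoded at the end of block $b-1$, so an error event can involve the states of both block $b-1$ and block $b$. One must check that, conditioned on correct relay decoding after block $b-1$ (an event governed by $q_{b-1}$), the block-$b$ decoding steps (governed by $q_b$) still have exponentially small error despite the mismatch between the two blocks. This is exactly where the ``infima inside'' structure of $\RYdIRcompound$ is essential: every relevant information quantity has already been minimized over all of $\Qset$, so the bound is insensitive both to which element of $\Qset$ acts in each block and to the mismatch between consecutive blocks. Everything else is a verbatim repetition of the proof of Lemma~\ref{lemm:RYcompoundDF}.
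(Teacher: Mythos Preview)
Your proposal is correct and takes essentially the same approach as the paper: reuse the block Markov code from the proof of Lemma~\ref{lemm:RYcompoundDF} for achievability (replacing the fixed actual state distribution $q^*$ by the block-dependent $q_b^*$ in the error bounds), and obtain the converse by noting that constant per-block state distributions $q_1=\cdots=q_B=q$ are a special case. The paper packages this slightly more compactly by first asserting $\opC(\avrc^{\Qset\times B})=\RYCcompound$ and $\opC^{\rstarC}(\avrc^{\Qset\times B})=\RYrCcompound$ (since all parties are aware of the per-block jamming scheme) and then invoking Lemma~\ref{lemm:RYcompoundDF} directly, but the content is the same.
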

The proof of Corollary~\ref{coro:RYcompoundDFb} is given in Appendix~\ref{app:RYcompoundDFb}.

\subsection{The AVRC}
We give lower and upper bounds, on the random code capacity and the deterministic code capacity, for the 
AVRC $\avrc$. 
\subsubsection{Random Code Lower and Upper Bounds}
Define 
\begin{align}
\label{eq:RYIRcompoundP}  
\RYdIRavc \triangleq \RYdIRcompound\bigg|_{\Qset=\pSpace(\Sset)} 
\,,\;
\RYrICav\triangleq \RYICcompound\bigg|_{\Qset=\pSpace(\Sset)}
\,.
\end{align}

\begin{theorem}
\label{theo:RYmain}
The random code capacity  of an AVRC $\avrc$ is bounded by
\begin{align}
\RYdIRavc\leq \RYrCav \leq \RYrICav \,.
\end{align}
\end{theorem}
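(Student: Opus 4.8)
The plan is to derive both bounds of Theorem~\ref{theo:RYmain} from the compound-channel machinery already in hand, specifically Corollary~\ref{coro:RYcompoundDFb}, combined with Ahlswede's Robustification and Elimination Techniques. The key observation is that $\RYdIRavc$ and $\RYrICav$ are exactly the quantities $\RYdIRcompound$ and $\RYICcompound$ evaluated at $\Qset=\pSpace(\Sset)$, and that $\pSpace(\Sset)$ is compact and convex, so the corollaries apply with this choice of $\Qset$.

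For the \emph{upper bound} $\RYrCav\leq\RYrICav$, first I would note that any random code that is good against the full AVRC $\avrc$ (where $S^n\sim q(s^n)$ for an arbitrary joint pmf) is in particular good against the block-compound channel $\avrc^{\Qset\times B}$ with $\Qset=\pSpace(\Sset)$, since the latter's jamming strategies form a subset of the former's. Hence $\RYrCav\leq\opC^{\rstarC}(\avrc^{\pSpace(\Sset)\times B})$ for every $B$, and applying (\ref{eq:RYcCSb}) with $\Qset=\pSpace(\Sset)$ gives $\RYrCav\leq\RYICcompound|_{\Qset=\pSpace(\Sset)}=\RYrICav$. (Alternatively, one bounds $\RYrCav$ directly by a Fano-type cutset argument over the worst i.i.d.\ state, which is again $\inf_{q\in\pSpace(\Sset)}$; either route is routine.)

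For the \emph{lower bound} $\RYdIRavc\leq\RYrCav$, fix $R<\RYdIRavc=\RYdIRcompound|_{\Qset=\pSpace(\Sset)}$. By Corollary~\ref{coro:RYcompoundDFb} there is a $(2^{nR},n,e^{-an})$ block Markov code over $\avrc^{\pSpace(\Sset)\times B}$ for large $n$. Now apply Ahlswede's Robustification Technique: permuting the time axis of this deterministic code by a uniformly chosen permutation $\pi\in S_n$ (applied consistently to the encoder, the per-symbol relaying functions, and the decoder) yields a random code whose error probability against \emph{any} fixed state sequence $s^n$ is bounded by $2^n\cdot$ (the average error over i.i.d.\ states of the appropriate type) — here one must be a little careful because the relaying functions are causal, but a permutation of the block structure preserves causality within blocks, and the Markov-block structure is exactly what makes the robustification go through. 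This produces a $(2^{nR},n,\eps_n)$ random code for the AVRC with $\eps_n\to0$, so $R$ is achievable by random codes, giving $\RYdIRavc\leq\RYrCav$.

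The main obstacle I anticipate is the careful bookkeeping in the robustification step: one has to check that a time permutation acts sensibly on a \emph{relay} code with strictly-causal relaying functions and a block-Markov superposition structure, and that the resulting randomized code's error is controlled uniformly over all $s^n\in\Sset^n$ (not just i.i.d.\ ones). This is where the block-compound model $\avrc^{\Qset\times B}$ earns its keep — permuting blocks rather than arbitrary symbols keeps the causal relay operations intact while still averaging over enough state patterns. Everything else (the two reductions to the compound bounds, and the observation that $\pSpace(\Sset)$ is compact convex so Corollary~\ref{coro:RYcompoundDFb}'s hypotheses hold) is immediate.
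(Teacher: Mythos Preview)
Your overall architecture matches the paper: the upper bound follows by observing that any AVRC random code works against the compound channel, and the lower bound comes from robustifying a block Markov code for the block-compound channel. But your description of the robustification step contains a genuine gap.

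You write that the resolution is ``permuting blocks rather than arbitrary symbols.'' If this means reordering the $B$ blocks, it fails on two counts: (i) the block Markov encoding/decoding depends on the \emph{order} of the blocks (the encoder in block $b$ uses $m_{b-1}'$, the relay uses $y_{1,b-1}^n$, the decoder works backwards), so block reordering does not yield a valid code; and (ii) for fixed $B$, permuting blocks gives only $B!$ randomizations, which does not average over enough state patterns to control the error against an arbitrary $s^{nB}$. A single permutation $\pi\in S_n$ applied across all $nB$ symbols, as you also suggest, destroys causality at the relay, since $f_{1,i}$ would then depend on future symbols.

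What the paper actually does is draw $B$ \emph{independent} in-block permutations $\pi_1,\ldots,\pi_B\in\Pi_n$, keep the block order intact, and in block $b$ let the encoder send $\pi_b^{-1}f_b(\cdot)$, the relay send $\pi_b^{-1}f_{1,b}(\pi_{b-1}y_{1,b-1}^n)$, and the decoder apply $\pi_b$ to $y_b^n$. Causality is preserved because the relay has all of $y_{1,b-1}^n$ before block $b$ begins, so it can apply $\pi_{b-1}$ to it. Ahlswede's RT is then applied \emph{recursively}, one block at a time, yielding a factor $(n+1)^{B|\Sset|}$ (polynomial in $n$ for fixed $B$), not $2^n$ as you wrote; the $2^n$ factor would swamp the $e^{-an}$ error of the compound code and give nothing. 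This is also why the block-compound model $\avrc^{\Qset\times B}$ is needed rather than the ordinary compound model: after the recursive RT, the state in each block is i.i.d.\ but with a possibly different distribution $q_b$ per block, which is exactly what Corollary~\ref{coro:RYcompoundDFb} handles.
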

The proof of Theorem~\ref{theo:RYmain} is given in Appendix~\ref{app:RYmain}. 
Together with Corollary~\ref{coro:RYcompoundDeg}, this yields another corollary.
\begin{coro}
\label{coro:RYavrcDeg}
Let $\avrc$ be an AVRC.
\begin{enumerate}[1)]
\item
If $\rc$ is reversely degraded, such that $\rc=W_{Y|X,X_1} W_{Y_1|Y,X_1,S}$,
 then
\begin{align}
\RYrCav= \RYdIRavc=\RYrICav=  \min_{q(s)} \max_{p(x,x_1)} I_q(X;Y|X_1) \,.
\end{align}
\item
If $\rc$ is degraded, such that $\rc=W_{Y_1|X,X_1} W_{Y|Y_1,X_1,S}$, then
\begin{align}
\RYrCav= \RYdIRavc=\RYrICav=  \max_{p(x,x_1)}   \min \left\{ \min_{q(s)} I_q(X,X_1;Y) \,,\; 
 I(X;Y_1|X_1)
\right\} \,.
\end{align}
\end{enumerate}
\end{coro}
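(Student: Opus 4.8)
The plan is to derive Corollary~\ref{coro:RYavrcDeg} directly from Theorem~\ref{theo:RYmain} and Corollary~\ref{coro:RYcompoundDeg}, specialized to the choice $\Qset=\pSpace(\Sset)$. Observe first that $\pSpace(\Sset)$ is a compact convex set, so Corollary~\ref{coro:RYcompoundDeg} applies to the compound relay channel $\RYcompound$ with $\Qset=\pSpace(\Sset)$. For such $\Qset$, the infimum over $q\in\Qset$ of any continuous functional of $q$ is attained, so we may write $\min_{q(s)}$ in place of $\inf_{q\in\Qset}$, which I would note explicitly at the outset.

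\medskip

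For part~1, assume $\rc=W_{Y|X,X_1}W_{Y_1|Y,X_1,S}$, which is a special case of the reversely degraded form in Corollary~\ref{coro:RYcompoundDeg}(1) with $W_{Y|X,X_1,S}=W_{Y|X,X_1}$ independent of $S$. Applying that corollary with $\Qset=\pSpace(\Sset)$ gives
\begin{align}
\RYdIRcompound\bigg|_{\Qset=\pSpace(\Sset)}=\RYICcompound\bigg|_{\Qset=\pSpace(\Sset)}=\min_{q(s)}\max_{p(x,x_1)}I_q(X;Y|X_1)\,.
\end{align}
By the definitions in (\ref{eq:RYIRcompoundP}), the left-hand sides are exactly $\RYdIRavc$ and $\RYrICav$. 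Theorem~\ref{theo:RYmain} sandwiches $\RYrCav$ between these two equal quantities, forcing $\RYrCav=\RYdIRavc=\RYrICav=\min_{q(s)}\max_{p(x,x_1)}I_q(X;Y|X_1)$. Part~2 is identical in structure: the degraded form $\rc=W_{Y_1|X,X_1}W_{Y|Y_1,X_1,S}$ is the special case of Corollary~\ref{coro:RYcompoundDeg}(2) with $W_{Y_1|X,X_1,S}=W_{Y_1|X,X_1}$, so that corollary yields $\RYdIRcompound|_{\Qset=\pSpace(\Sset)}=\RYICcompound|_{\Qset=\pSpace(\Sset)}=\max_{p(x,x_1)}\min\{\min_{q(s)}I_q(X,X_1;Y),\,I(X;Y_1|X_1)\}$ (the second mutual information has no $q$-dependence since $W_{Y_1|X,X_1}$ is state-independent), and Theorem~\ref{theo:RYmain} again closes the gap to pin down $\RYrCav$.

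\medskip

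There is essentially no obstacle here beyond bookkeeping: the real work was done in proving Corollary~\ref{coro:RYcompoundDeg} and Theorem~\ref{theo:RYmain}. The one point requiring a word of care is that Corollary~\ref{coro:RYcompoundDeg} is stated for a general compact convex $\Qset$, so I must confirm $\pSpace(\Sset)$ qualifies (it does, being the probability simplex over a finite set), and that the $\inf$'s in the definitions (\ref{eq:RYICcompound})--(\ref{eq:RYdIRcompound}) genuinely become $\min$'s under this substitution, which holds because $I_q(\cdot)$ is continuous (indeed affine, as noted for $\err(q,\code)$) in $q$ over a compact domain. With that remark in place, both parts follow by direct substitution and the squeeze from Theorem~\ref{theo:RYmain}.
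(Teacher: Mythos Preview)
Your proposal is correct and follows exactly the route the paper indicates: it states only that Corollary~\ref{coro:RYavrcDeg} is obtained from Theorem~\ref{theo:RYmain} ``together with Corollary~\ref{coro:RYcompoundDeg},'' and you have spelled out precisely this specialization to $\Qset=\pSpace(\Sset)$ along with the sandwich argument. The added remarks that $\pSpace(\Sset)$ is compact convex and that the infima are attained are appropriate bookkeeping details that the paper leaves implicit.
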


Before we proceed to the deterministic code capacity, we 
note that Ahlswede's Elimination Technique \cite{Ahlswede:78p} 
can be applied to the AVRC as well. Hence, the size of the code collection of any reliable random code can be reduced to polynomial size.

\subsubsection{Deterministic Code Lower and Upper Bounds}
In the next statements, we characterize the deterministic code capacity of the AVRC $\avrc$. 
We consider conditions under which the deterministic code capacity coincides with the random code capacity, and conditions under which it is lower.
For every $x_1\in\Xset_1$, let $\avc_1(x_1)$ and $\avc(x_1)$ denote the marginal AVCs from the sender to the relay and from the sender to the destination receiver, respectively, 
 \begin{align}
\label{eq:RYmarginAVCs}
\avc_1(x_1)=\{ W_{Y_1|X,X_1,S}(\cdot|\cdot,x_1,\cdot) \} \,,\;
\avc(x_1)=\{ W_{Y|X,X_1,S}(\cdot|\cdot,x_1,\cdot) \} \,.
\end{align}
\begin{lemma}
\label{lemm:RYcorrTOdetC} 
If the marginal sender-relay and sender-reciever AVCs have positive capacities, \ie $\opC(\avc_1(x_{1,1}))>0$ and $\opC(\avc(x_{1,2}))$ $>0$, for some $x_{1,1},x_{1,2}\in\Xset_1$, 
 then the capacity of the AVRC $\avrc$ coincides with the random code capacity, \ie 
$\RYCavc = \RYrCav$.  
\end{lemma}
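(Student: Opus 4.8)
The plan is to combine Ahlswede's Elimination Technique with a two-phase coding scheme in the spirit of Ahlswede's original treatment of the AVC \cite{Ahlswede:78p}. By Theorem~\ref{theo:RYmain} we always have $\RYCavc \le \RYrCav$, so the content is the reverse inequality. Fix a rate $R < \RYrCav$ and, for large $n$, let $\gcode=(\mu,\Gamma,\{\code_\gamma\}_{\gamma\in\Gamma})$ be a $(2^{nR},n,\eps_n)$ random code with $\eps_n\to 0$. By the Elimination Technique applied to the AVRC (which holds here, as noted after Corollary~\ref{coro:RYavrcDeg}), we may assume $|\Gamma|$ is polynomial in $n$, say $|\Gamma|\le n^2$, while retaining a vanishing error bound, uniformly over all $q(s^n)\in\pSpace(\Sset^n)$. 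It then suffices to convey the index $\gamma\in\Gamma$ — that is, $O(\log n)$ bits — reliably and deterministically over the AVRC in a negligible fraction of the block, because once the receiver knows $\gamma$ it can decode the message using $\code_\gamma$.

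The main step is this prefix transmission. Here is where the two positivity hypotheses enter. First use a block of length $n_1 = o(n)$ but $n_1 \to \infty$ with the sender fixing its relay-directed behaviour: the relay transmits the constant symbol $x_{1,1}$ throughout this phase, so the sender-to-relay link is exactly the marginal AVC $\avc_1(x_{1,1})$, which has positive deterministic capacity by hypothesis. Thus the sender can reliably transmit $\gamma$ to the relay over $n_1$ channel uses (Csisz\'ar–Narayan/Ahlswede: positive AVC capacity suffices to send $O(\log n)$ bits with exponentially small error, uniformly over state sequences). Then use a second block of length $n_2 = o(n)$, $n_2\to\infty$, in which the \emph{relay} forwards $\gamma$ to the destination while the sender holds its own input fixed; with the relay ranging over $\Xset$ and the sender input held at the value making $\avc(x_{1,2})$ positive — more precisely, one sets up the relay-to-receiver marginal so that it is the positive-capacity AVC $\avc(x_{1,2})$ with roles adjusted — the relay reliably delivers $\gamma$ to the decoder. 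A cleaner way to phrase the second phase: once the relay knows $\gamma$, treat $(X_1)$ as the effective input and the sender's fixed symbol as a parameter; positivity of $\opC(\avc(x_{1,2}))$ gives reliable deterministic transmission of $O(\log n)$ bits from relay to receiver. After these two short phases, the remaining $n - n_1 - n_2 = n(1-o(1))$ symbols carry the message via $\code_\gamma$, now known to both relay and receiver; the rate penalty is $o(1)$ and the total error probability is the sum of three vanishing terms, still uniform over $q(s^n)$.

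The main obstacle is handling the relay's strictly causal, nonrobust structure cleanly in the prefix phase: the relay must \emph{learn} $\gamma$ before it can help forward it, so the two sub-blocks must be genuinely sequential, and one must check that constraining the sender's and relay's inputs to fixed symbols during each sub-phase indeed exposes exactly the marginal AVCs $\avc_1(x_{1,1})$ and $\avc(x_{1,2})$ of (\ref{eq:RYmarginAVCs}), with error bounds uniform over arbitrary (non-i.i.d.) state sequences on those sub-blocks. A secondary technical point is that the Elimination Technique must be verified for the AVRC model — the derandomization of $\mu$ down to polynomial support — but this follows the standard argument: a random selection of $n^2$ codes from $\Gamma$ has, by a union bound over a suitably fine cover of the (compact, by continuity of $\err(\cdot,\code)$ in $q$) set of single-letter jamming strategies and then a standard reduction from $\pSpace(\Sset^n)$, small average error for every $q(s^n)$ with high probability. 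Once these pieces are in place, letting $n\to\infty$ and $R\uparrow\RYrCav$ gives $\RYCavc \ge \RYrCav$, completing the proof.
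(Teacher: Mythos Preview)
Your overall architecture --- Elimination Technique to reduce $|\Gamma|$ to polynomial size, then a short deterministic prefix conveying $\gamma$ --- is exactly right and matches the paper. The first sub-phase (sender to relay over $\avc_1(x_{1,1})$ with the relay holding $X_1=x_{1,1}$) is also correct.

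The gap is in your second sub-phase. You propose that the \emph{relay} forwards $\gamma$ to the receiver and invoke positivity of $\opC(\avc(x_{1,2}))$ to justify this. But look again at the definition in (\ref{eq:RYmarginAVCs}): $\avc(x_{1,2})=\{W_{Y|X,X_1,S}(\cdot|\cdot,x_{1,2},\cdot)\}$ is the AVC from the \emph{sender} $X$ to the destination $Y$ with the relay input held fixed at $x_{1,2}$. It is not a relay-to-receiver channel, and its positive capacity tells you nothing about whether the relay can reach the receiver on its own. Your attempted ``roles adjusted'' rephrasing does not salvage this --- treating $X_1$ as the effective input while fixing $X$ would require a different hypothesis entirely.

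The fix is immediate and is what the paper does: in the second sub-phase the \emph{sender} transmits $\gamma$ a second time, now aimed at the receiver, while the relay holds $X_1=x_{1,2}$ constant. This exposes exactly the AVC $\avc(x_{1,2})$, whose positive deterministic capacity lets the receiver recover $\gamma$ reliably over $\nu''=o(n)$ symbols. The relay does no forwarding in the prefix at all; it merely holds the prescribed constants $x_{1,1}$ then $x_{1,2}$, and decodes $\hat\gamma'$ from its own received block so that it can apply the correct relaying function $f_{1,\hat\gamma'}^n$ in the main phase. With that correction your argument goes through verbatim.
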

The proof of Lemma~\ref{lemm:RYcorrTOdetC} is given in Appendix~\ref{app:RYcorrTOdetC}.
Next, we give a computable sufficient condition, under which the deterministic code capacity coincides with the random code capacity.
For the point to point AVC, this occurs if and only if the channel is non-symmetrizable \cite{Ericson:85p}\cite[Definition 2]{CsiszarNarayan:88p}. Our condition here is given in terms of an  extended definition of symmetrizability, akin to  
\cite[Definition 9]{HofBross:06p}. 

\begin{definition}
\label{def:RYsymmetrizableGivenX1}
 A state-dependent relay channel $\rc$ is said to be \emph{symmetrizable}-$\Xset|\Xset_1$ if for some conditional distribution $J(s|x)$,
\begin{multline}
\label{eq:symmetrizableGivenX1}
\sum_{s\in\Sset} \rc(y,y_1|x,x_1,s)J(s|\tx)=\sum_{s\in\Sset} \rc(y,y_1|\tx,x_1,s)J(s|x) \,, \\
\forall\, x,\tx\in\Xset \,,\; x_1\in\Xset_1 \,,\;  y\in\Yset \,,\; y_1\in\Yset_1 \,.
\end{multline}
Equivalently, for every given $x_1\in\Xset_1$, the DMC $W_{\bar{Y}|X,X_1,S}(\cdot|\cdot,x_1,\cdot)$ is symmetrizable,
where $\bar{Y}=(Y,Y_1)$. 
\end{definition}

A similar definition applies to the 
marginals $W_{Y|X,X_1,S}$ and $W_{Y_1|X,X_1,S}$. 
\begin{coro}
\label{coro:RYmainDbound}
Let $\avrc$ be an AVRC.
\begin{enumerate}[1)]
\item
If $W_{Y|X,X_1,S}$ and $W_{Y_1|X,X_1,S}$ are non-symmetrizable-$\Xset|\Xset_1$, then  
$\RYCavc = \RYrCav$. In this case, 
\begin{align}
\RYdIRavc\leq \RYCavc \leq \RYrICav \,.
\end{align}
\item
If $\rc$ is reversely degraded, such that $\rc=W_{Y|X,X_1} W_{Y_1|Y,X_1,S}$,
where $W_{Y_1|X,X_1,S}$ is non-\\symmetrizable-$\Xset|\Xset_1$ and
$W_{Y|X,X_1}(y|x,x_1)\neq W_{Y|X,X_1}(y|\tx,x_1)$ for some $x,\tx\in\Xset$, $x_1\in\Xset_1$ and $y\in\Yset$,
 then
\begin{align}
\RYCavc=\RYrCav= \RYdIRavc=\RYrICav=  \min_{q(s)} \max_{p(x,x_1)} I_q(X;Y|X_1) \,.
\end{align}
\item
If $\rc$ is degraded, such that $\rc=W_{Y_1|X,X_1} W_{Y|Y_1,X_1,S}$,
where $W_{Y|X,X_1,S}$ is non-\\symmetrizable-$\Xset|\Xset_1$ and
$W_{Y_1|X,X_1}(y_1|x,x_1)\neq W_{Y_1|X,X_1}(y_1|\tx,x_1)$ for some $x,\tx\in\Xset$, $x_1\in\Xset_1$ and $y_1\in\Yset_1$,
 then
\begin{align}
\RYCavc=\RYrCav= \RYdIRavc=\RYrICav=  \max_{p(x,x_1)}   \min \left\{ \min_{q(s)} I_q(X,X_1;Y) \,,\; 
 I(X;Y_1|X_1)
\right\} \,.
\end{align}
\end{enumerate}
\end{coro}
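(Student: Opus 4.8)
The plan is to derive Corollary~\ref{coro:RYmainDbound} from the combination of Theorem~\ref{theo:RYmain}, Lemma~\ref{lemm:RYcorrTOdetC}, and Corollary~\ref{coro:RYavrcDeg}, with the only genuinely new ingredient being the passage from the extended non-symmetrizability hypotheses to the hypotheses of Lemma~\ref{lemm:RYcorrTOdetC}.

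\textbf{Part 1.} First I would observe that non-symmetrizability-$\Xset|\Xset_1$ of $W_{Y|X,X_1,S}$ means that for every $x_1\in\Xset_1$ the marginal DMC $W_{Y|X,X_1,S}(\cdot|\cdot,x_1,\cdot)$ is non-symmetrizable in the classical Csisz\'ar--Narayan sense, and likewise for $W_{Y_1|X,X_1,S}$. By the Csisz\'ar--Narayan / Ericson characterization \cite{Ericson:85p,CsiszarNarayan:88p}, a non-symmetrizable DMC-AVC has positive deterministic code capacity, so $\opC(\avc_1(x_1))>0$ and $\opC(\avc(x_1))>0$ for \emph{every} $x_1\in\Xset_1$; in particular the existential hypothesis of Lemma~\ref{lemm:RYcorrTOdetC} is met (take $x_{1,1}=x_{1,2}$ to be any element of $\Xset_1$). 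Hence $\RYCavc=\RYrCav$, and substituting the bounds of Theorem~\ref{theo:RYmain} gives $\RYdIRavc\leq\RYCavc\leq\RYrICav$.

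\textbf{Parts 2 and 3.} These are symmetric, so I would write out Part~2 and note Part~3 follows by interchanging the roles of $Y$ and $Y_1$. For Part~2, $\rc=W_{Y|X,X_1}W_{Y_1|Y,X_1,S}$ is reversely degraded. The sender-relay marginal $W_{Y_1|X,X_1,S}$ is assumed non-symmetrizable-$\Xset|\Xset_1$, so as in Part~1, $\opC(\avc_1(x_1))>0$ for every $x_1$. For the sender-receiver marginal: since $W_{Y|X,X_1}$ does not depend on $S$, the marginal AVC $\avc(x_1)$ is in fact a single DMC $W_{Y|X,X_1}(\cdot|\cdot,x_1)$, whose capacity is positive iff it is not a ``useless'' channel, i.e. iff $W_{Y|X,X_1}(y|x,x_1)\neq W_{Y|X,X_1}(y|\tx,x_1)$ for some $x,\tx,y$ — which is exactly the stated hypothesis. (A stateless DMC is trivially non-symmetrizable unless all its rows coincide.) So $\opC(\avc(x_1))>0$ for the appropriate $x_1$, Lemma~\ref{lemm:RYcorrTOdetC} applies, and $\RYCavc=\RYrCav$. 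Finally, invoking Corollary~\ref{coro:RYavrcDeg}\,(1), the chain $\RYrCav=\RYdIRavc=\RYrICav=\min_{q(s)}\max_{p(x,x_1)} I_q(X;Y|X_1)$ is already established, and combining with $\RYCavc=\RYrCav$ completes the claim. Part~3 is identical with Corollary~\ref{coro:RYavrcDeg}\,(2) in place of (1), $W_{Y|X,X_1,S}$ in place of $W_{Y_1|X,X_1,S}$, and $W_{Y_1|X,X_1}$ in place of $W_{Y|X,X_1}$.

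The only subtlety — and the ``hardest'' point, though it is more a matter of care than of difficulty — is being precise about what non-symmetrizability-$\Xset|\Xset_1$ gives for the \emph{marginal} AVCs $\avc(x_1)$ and $\avc_1(x_1)$: one must note that Definition~\ref{def:RYsymmetrizableGivenX1} is stated so that the condition holds for \emph{every} $x_1$, which is precisely what lets us pick the single $x_1$ needed by Lemma~\ref{lemm:RYcorrTOdetC}, and one must separately handle the degenerate stateless marginal in Parts~2 and 3 where ``non-symmetrizable'' reduces to ``non-degenerate rows,'' matching the explicit inequality hypothesis. Everything else is bookkeeping: assemble Theorem~\ref{theo:RYmain}, Lemma~\ref{lemm:RYcorrTOdetC}, and Corollary~\ref{coro:RYavrcDeg}.
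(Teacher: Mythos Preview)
Your approach is essentially identical to the paper's: reduce to Lemma~\ref{lemm:RYcorrTOdetC} via the Csisz\'ar--Narayan characterization, then invoke Theorem~\ref{theo:RYmain} and Corollary~\ref{coro:RYavrcDeg}. One logical slip is worth correcting: you write that non-symmetrizability-$\Xset|\Xset_1$ means the marginal at \emph{every} $x_1$ is non-symmetrizable, but Definition~\ref{def:RYsymmetrizableGivenX1} says the channel is symmetrizable-$\Xset|\Xset_1$ when the marginal at every $x_1$ is symmetrizable, so its negation only guarantees that the marginal at \emph{some} $x_1$ is non-symmetrizable. This is exactly what the paper extracts (``there exist $x_{1,1},x_{1,2}\in\Xset_1$ such that\ldots''), and it is all Lemma~\ref{lemm:RYcorrTOdetC} requires, so your argument still goes through once the quantifier is fixed.
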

The proof of Corollary~\ref{coro:RYmainDbound} is given in Appendix~\ref{app:RYmainDbound}.
Note that there are  $4$ symmetrizability cases in terms of the sender-relay channel $W_{Y_1|X,X_1,S}$ and  the sender-receiver channel  $W_{Y|X,X_1,S}$.
For the case where $W_{Y_1|X,X_1,S}$ and $W_{Y|X,X_1,S}$ are both non-symmetrizable-$\Xset|\Xset_1$, the lemma above asserts that the  capacity coincides with the random code capacity. 
In other cases, one may expect the capacity to be lower than the random code capacity. For instance, if $W_{Y|X,X_1,S}$ is non-symmetrizable-$\Xset|\Xset_1$, while $W_{Y_1|X,X_1,S}$ \emph{is} symmetrizable-$\Xset|\Xset_1$, then  the capacity is positive by direct transmission. Furthermore, in this case, if the channel is reversely degraded, then the capacity coincides with the random code capacity. However, 
it remains in question whether this is true in general, when the channel is not reveresly degraded.

Next, we consider conditions under which the capacity is zero.
Observe that if $\rc$ is symmetrizable-$\Xset|\Xset_1$ then so are $W_{Y|X,X_1,S}$ and $W_{Y_1|X,X_1,S}$. Intuitively, this means that 
 the AVRC is a poor channel as well. 
For example, say $Y_1=X+X_1+S$ and $Y=X\cdot X_1\cdot S$, then the jammer can confuse the decoder by taking the state sequence to be some codeword.
The following lemma validates this intuition.
\begin{lemma}
\label{lemm:RYzeroCsymm}
If the AVRC $\avrc$ is symmetrizable-$\Xset|\Xset_1$, then it has zero capacity, \ie $\RYCavc=0$. 
\end{lemma}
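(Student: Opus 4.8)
The plan is to mimic the classical Csisz\'ar--Narayan argument showing that symmetrizability of a point-to-point AVC forces zero deterministic-code capacity, adapting it to the relay setting. Suppose, for contradiction, that $\RYCavc>0$, so that for some rate $R>0$ and all large $n$ there is a $(2^{nR},n,\eps_n)$ code $\code=(\enc,\encn_1,\dec)$ with $\eps_n\to 0$. Fix the symmetrizing kernel $J(s|x)$ from Definition~\ref{def:RYsymmetrizableGivenX1}, which satisfies \eqref{eq:symmetrizableGivenX1} simultaneously for the joint output $\bar Y=(Y,Y_1)$, hence for both marginals $W_{Y|X,X_1,S}$ and $W_{Y_1|X,X_1,S}$ and for \emph{every} fixed $x_1\in\Xset_1$. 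Extend $J$ to $n$-sequences memorylessly, $J(s^n|x^n)=\prod_i J(s_i|x_i)$.

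The key step is the \emph{attack}: pick two distinct messages $m\neq m'$, let $x^n=\enc(m)$ and $x'^n=\enc(m')$, and let the jammer draw the state sequence $S^n\sim J(\cdot|x'^n)$ when the true message is $m$. I would show that under this attack the pair $(Y^n,Y_1^n)$ has a distribution that, conditioned on the relay's transmitted sequence, is symmetric under swapping $m\leftrightarrow m'$. The care needed here is that $X_1^n$ is produced causally from $Y_1^{i-1}$ by the relay functions $\enc_{1,i}$, so $X_{1,i}$ is itself random and correlated with the past channel outputs. I would therefore argue letter-by-letter/inductively: conditioned on any fixed realization of the relay's past outputs $y_1^{i-1}$ — which fixes $x_{1,i}=\enc_{1,i}(y_1^{i-1})$ — the per-letter identity \eqref{eq:symmetrizableGivenX1} applied with that value of $x_1$ gives
\begin{align}
\sum_{s_i} \rc(y_i,y_{1,i}|x_i,x_{1,i},s_i)J(s_i|x'_i)
=\sum_{s_i} \rc(y_i,y_{1,i}|x'_i,x_{1,i},s_i)J(s_i|x_i) \,.
\end{align}
Chaining this over $i=1,\ldots,n$ (the relay map is the same in both scenarios, since it depends only on $y_1^{i-1}$, not on the message) yields that the averaged channel $\sum_{s^n} \nRC(y^n,y_1^n|x^n,x_1^n,s^n)J(s^n|x'^n)$ with $x_1^n$ integrated against the relay's induced conditional law is exactly invariant under the exchange $(x^n,x'^n)\mapsto(x'^n,x^n)$.

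From this symmetry the contradiction is routine: the decoder $\dec(Y^n)$ must, under this attack, be right about $m$ with probability $\ge 1-\eps_n$ and simultaneously — by the exchange symmetry — be ``right about $m'$'' with the same probability on the identical output distribution, so $\Pr(\dec(Y^n)=m)$ and $\Pr(\dec(Y^n)=m')$ are both $\ge 1-\eps_n$ for the \emph{same} distribution on $Y^n$, which is impossible once $\eps_n<1/2$. (Equivalently: there exists $q(s^n)$, namely the mixture $\frac12 J(\cdot|x^n)+\frac12 J(\cdot|x'^n)$ averaged over the two equally likely messages, for which $\err(q,\code)\ge \frac12(1-\eps_n)$, contradicting \eqref{eq:RYerr} for large $n$.) Hence no positive rate is achievable and $\RYCavc=0$. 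I expect the main obstacle to be handling the causal, output-dependent relay sequence $X_1^n$ cleanly — making rigorous that the symmetrization identity survives the averaging over the relay's randomized (through $Y_1^{i-1}$) inputs; the standard point-to-point proof has no such coupling, so the inductive/conditioning bookkeeping is where the real work lies, though it goes through precisely because \eqref{eq:symmetrizableGivenX1} is required to hold for every $x_1$ separately.
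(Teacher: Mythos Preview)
Your treatment of the causal relay is correct and matches the paper: because \eqref{eq:symmetrizableGivenX1} holds for \emph{every} $x_1$, applying it letter by letter with $x_{1,i}=\enc_{1,i}(y_1^{i-1})$ yields, for each fixed $(y^n,y_1^n)$,
\[
\sum_{s^n}\nRC(y^n,y_1^n\mid x^n,f_1^n(y_1^n),s^n)\,J^n(s^n\mid x'^n)
=\sum_{s^n}\nRC(y^n,y_1^n\mid x'^n,f_1^n(y_1^n),s^n)\,J^n(s^n\mid x^n),
\]
which is precisely the paper's identity \eqref{eq:RYzeroCsymm2}. No ``integration against the relay's law'' is needed; the relay map is deterministic in $y_1^{i-1}$, and the identity holds pointwise.

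The gap is in your contradiction step. You assert that under the attack $q=J^n(\cdot\mid x'^n)$ the decoder is correct for the \emph{specific} message $m$ with probability $\ge 1-\eps_n$, and likewise for $m'$ under $J^n(\cdot\mid x^n)$. But \eqref{eq:RYerr} is an \emph{average}-error guarantee over all $2^{nR}$ messages; it gives no per-message bound, so your two-message argument (and the parenthetical two-codeword mixture) would only work under a maximal-error criterion. The remedy---used in the paper, following Ericson---is to let the jammer imitate a uniformly random codeword: set $q(s^n)=\frac{1}{\dM}\sum_{\tm}J^n(s^n\mid \enc(\tm))$ with $\dM=2^{nR}\ge 2$. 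Then $\err(q,\code)=\frac{1}{\dM^2}\sum_{m,\tm}P(\text{err}\mid m,\,J^n(\cdot\mid \enc(\tm)))$, and your pairwise symmetry gives $P(\text{err}\mid m,J^n(\cdot\mid \enc(\tm)))+P(\text{err}\mid \tm,J^n(\cdot\mid \enc(m)))\ge 1$ for $m\neq\tm$, whence $\err(q,\code)\ge \frac{\dM(\dM-1)}{2\dM^2}\ge \frac{1}{4}$. So the pairwise identity you established is the right engine; what is missing is the outer averaging over all codewords that turns it into a lower bound on the \emph{average} error.
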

Lemma~\ref{lemm:RYzeroCsymm} is proved in Appendix~\ref{app:RYzeroCsymm}.
%
	%
%
If the AVRC is degraded then, we have a simpler symmetrizability condition under which the capacity is zero.
\begin{definition}
\label{def:RYsymmetrizableGivenY1}
Let $\rc=W_{Y_1|X,X_1} W_{Y|Y_1,X_1,S}$ be a degraded relay channel. We say that $\rc$
is \emph{symmetrizable}-$\Xset_1\times\Yset_1$ if for some conditional distribution $J(s|x_1,y_1)$,
\begin{multline}
\label{eq:symmetrizableGivenY1}
\sum_{s\in\Sset} W_{Y|Y_1,X_1,S}(y|y_1,x_1,s)J(s|\tx_1,\ty_1)=\sum_{s\in\Sset} W_{Y|Y_1,X_1,S}(y|\ty_1,\tx_1,s)J(s|x_1,y_1) \,, \\
\forall\,  \tx_1,x_1\in\Xset_1 \,,\;  y\in\Yset \,,\; y_1,\ty_1\in\Yset_1 \,.
\end{multline}
Equivalently, the DMC $W_{Y|\bar{Y}_1,S}$ 
 is symmetrizable, 
 where $\bar{Y}_1=(Y_1,X_1)$. 
\end{definition}
\begin{lemma}
\label{lemm:RYzeroCsymmDeg}
If the AVRC $\avrc$ is degraded and symmetrizable-$\Xset_1\times\Yset_1$, then it has zero capacity, \ie $\RYCavc=0$. 
\end{lemma}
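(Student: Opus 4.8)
The plan is to reduce the problem to a point-to-point AVC symmetrizability argument applied to the stochastically-degraded ``second hop'' DMC $W_{Y|\bar Y_1,S}$ with $\bar Y_1=(Y_1,X_1)$, mimicking the converse of Csisz\'ar--Narayan \cite{CsiszarNarayan:88p} for symmetrizable AVCs. Fix a $(2^{nR},n)$ code $\code=(\enc,\encn_1,\dec)$ for $\avrc$. Since $\rc=W_{Y_1|X,X_1}W_{Y|Y_1,X_1,S}$ is degraded, the destination's output $Y^n$ depends on $(X^n,X_1^n,S^n)$ only through $(Y_1^n,X_1^n,S^n)$, so from the decoder's point of view the effective channel is $W_{Y|\bar Y_1,S}$ driven by the ``input'' sequence $\bar Y_1^n=(Y_1^n,X_1^n)$. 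The relay's strictly-causal operation makes $\bar Y_1^n$ a random function of $(X^n,S^n)$ via $W_{Y_1|X,X_1}$ and the relaying functions $\enc_{1,i}$, but for the converse we do not need its internal structure — only that $\bar Y_1^n$ ranges over some subset of $(\Yset_1\times\Xset_1)^n$.

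First I would set up the attack: given symmetrizability-$\Xset_1\times\Yset_1$, there is $J(s|x_1,y_1)$ witnessing (\ref{eq:symmetrizableGivenY1}), i.e.\ $W_{Y|\bar Y_1,S}$ is a symmetrizable AVC with witness $J(s|\bar y_1)$. Pick two messages $m\neq m'$. Run the code honestly on $m$; let $\bar Y_1^n$ be the induced relay-side sequence. The jammer, who may concentrate the state on any sequence, instead draws $S^n$ by first independently simulating an honest execution of the code on $m'$ to obtain an ``alternative'' relay-side sequence $\tilde{\bar Y}_1^n$, then drawing $S_i\sim J(\cdot|\tilde{\bar Y}_{1,i})$ coordinatewise. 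This is a valid (randomized) choice of state distribution $\qn(s^n)\in\pSpace(\Sset^n)$; since the error criterion (\ref{eq:RYerr}) must hold for \emph{all} $\qn$, it holds in particular for this mixture. Then the symmetrizability identity (\ref{eq:symmetrizableGivenY1}), summed against the product structure, gives that the distribution of $Y^n$ under ``true message $m$, jamming state from $m'$'' equals the distribution under ``true message $m'$, jamming state from $m$'': concretely $\sum_{s^n}\big[\prod_i J(s_i|\tilde{\bar y}_{1,i})\big]\prod_i W_{Y|\bar Y_1,S}(y_i|\bar y_{1,i},s_i)$ is symmetric under swapping $\bar y_1^n\leftrightarrow\tilde{\bar y}_1^n$. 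Averaging over the independent relay-side randomness for $m$ and $m'$ preserves the symmetry, so the decoder's output distribution cannot distinguish $m$ from $m'$; standard manipulation (as in \cite[proof of Thm.\ 1]{CsiszarNarayan:88p}) forces the average error probability to be at least a constant (roughly $\tfrac14$, or $\tfrac12(1-1/|\Msg|)$-type bound), uniformly in $n$, whenever $2^{nR}\ge 2$. Hence no positive rate is achievable and $\RYCavc=0$.

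The main obstacle is handling the relay's \emph{strictly causal, deterministic} processing inside this symmetrization argument: the jammer cannot literally observe $X^n$ to produce $\tilde{\bar Y}_1^n$, so one must argue that it suffices for the jammer to simulate the code's own (public) encoding/relaying maps on an independently chosen message $m'$ — the maps are known to all parties by the AVRC convention, and the only private randomness is the channel $W_{Y_1|X,X_1}$, which the jammer can also simulate with fresh coins. Making precise that this two-stage draw (simulate honest run on $m'$, then apply $J$ coordinatewise to the resulting $(\tilde Y_1^n,\tilde X_1^n)$) is a legitimate element of $\pSpace(\Sset^n)$, and that the resulting $Y^n$-law is exactly the honest-run law with $m$ and $m'$ interchanged, is the crux; once that symmetry is in place, the lower bound on error is routine. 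A secondary point to check carefully is the definitional equivalence asserted at the end of Definition~\ref{def:RYsymmetrizableGivenY1} — that (\ref{eq:symmetrizableGivenY1}) is precisely ordinary AVC-symmetrizability of $W_{Y|\bar Y_1,S}$ with composite input $\bar Y_1=(Y_1,X_1)$ — which lets us quote the point-to-point converse verbatim rather than re-deriving it.
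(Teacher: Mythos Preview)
Your proposal is correct and follows essentially the same route as the paper's proof: the paper defines the jamming distribution $q(s^n)=\frac{1}{\dM}\sum_{\tm}\sum_{\ty_1^n} W_{Y_1^n|X^n,X_1^n}(\ty_1^n|f(\tm),f_1^n(\ty_1^n))\,J^n(s^n|f_1^n(\ty_1^n),\ty_1^n)$, uses the symmetrizability identity (\ref{eq:symmetrizableGivenY1}) coordinatewise to swap $(m,y_1^n)\leftrightarrow(\tm,\ty_1^n)$, and obtains the $\tfrac{\dM(\dM-1)}{2\dM^2}\geq\tfrac14$ bound---exactly your simulation-then-swap argument. One presentational point: your phrase ``Pick two messages $m\neq m'$'' obscures that the jammer must average over a \emph{uniformly random} $m'$ (independent of the true $m$) to obtain a single $q\in\pSpace(\Sset^n)$ against which the average error is large; the paper makes this averaging explicit from the outset, and that is what your ``this mixture'' must mean for the $\tfrac12(1-1/\dM)$ bound to go through.
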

Lemma~\ref{lemm:RYzeroCsymmDeg} is proved in Appendix~\ref{app:RYzeroCsymmDeg}. An example is given 
below. 
\begin{example}
\label{example:BSRCsymm}
Consider a state-dependent relay channel  $\rc$, specified by 
\begin{align}
Y_1=& X+Z \mod 2 \,, \nonumber\\
Y=& X_1+S						 \,, \nonumber
\end{align}
where $\Xset=\Xset_1=\Zset=\Sset=\Yset_1=\{ 0,1 \}$ and $\Yset=\{0,1,2\}$, and the additive noise is distributed according to $Z\sim\text{Bernoulli}(\theta)$,  $0\leq\theta\leq 1$. 
It is readily seen that $\rc$ is degraded and symmetrizable-$\Xset_1\times\Yset_1$, by
(\ref{eq:RYdegraded}) and (\ref{eq:symmetrizableGivenY1}). In particular, (\ref{eq:symmetrizableGivenY1}) is satisfied with $J(s|x_1,y_1)=1$ for 
$s=x_1$, and $J(s|x_1,y_1)=0$ otherwise.
Hence, by Lemma~\ref{lemm:RYzeroCsymmDeg}, the capacity is $\RYCavc=0$.
On the other hand, we show that the random code capacity is given by 
$\RYrCav=\min\left\{\frac{1}{2},1-h(\theta)\right\}$, using Corollary~\ref{coro:RYavrcDeg}.
The derivation is given in Appendix~\ref{app:BSRCsymm}.
\end{example}

\subsection{AVRC with Orthogonal Sender Components}
Consider the special case of  a  relay channel $W_{Y,Y_1|X,X_1,S}$ with orthogonal sender components \cite{ElGamalZahedi:05p} \cite[Section 16.6.2]{ElGamalKim:11b}, where $X=(X',X'')$ and 
\begin{align}
W_{Y,Y_1|X',X'',X_1,S}(y,y_1|x',x'',x_1,s)=
W_{Y|X',X_1,S}(y|x',x_1,s) \cdot W_{Y_1|X'',X_1,S}(y_1|x'',x_1,s)  \,.
\label{eq:OrSrc}
\end{align}
Here, we address the case where the channel output depends on the state only through the relay, \ie
$W_{Y|X',X_1,S}(y|x',x_1,s)=W_{Y|X',X_1}(y|x',x_1)$.
\begin{lemma}
\label{lemm:RYmainOrS}
Let $\avrc$ $=$ $\{W_{Y|X',X_1}$ $W_{Y_1|X'',X_1,S}\}$  be an AVRC with orthogonal sender components. 
The random code capacity  of $\avrc$ is given by 
\begin{align}  
\label{eq:RYICavcOrthoS} 
\RYrCav = \RYdIRavc=\RYrICav=
 \max_{p(x_1) p(x'|x_1) p(x''|x_1)} 
\min \big\{ I(X',X_1;Y) \,,\;
\min_{q(s)} I_{q}(X'';Y_1|X_1)+I(X';Y|X_1) 
\big\} \,.
\end{align}
If $W_{Y_1|X'',X_1,S}$ is non-symmetrizable-$\Xset''|\Xset_1$,  
and  $W_{Y|X',X_1}(y|x',x_1)\neq W_{Y|X',X_1}(y|\tx',x_1)$ for some $x_1\in\Xset_1$, $x',\tx'\in\Xset'$, $y\in\Yset$, 
then the deterministic code capacity is given by  $\RYCavc=\RYdIRavc=\RYrICav$.
\end{lemma}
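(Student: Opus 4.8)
The plan is to prove the equality chain by sandwiching the right‑hand side of \eqref{eq:RYICavcOrthoS} between the partial decode‑forward lower bound and the cutset upper bound, and then to derive the deterministic‑code statement from Lemma~\ref{lemm:RYcorrTOdetC}. Write $R^\star$ for the right‑hand side of \eqref{eq:RYICavcOrthoS}. Recall from \eqref{eq:RYIRcompoundP} that $\RYdIRavc=\RYdIRcompound\big|_{\Qset=\pSpace(\Sset)}$ and $\RYrICav=\RYICcompound\big|_{\Qset=\pSpace(\Sset)}$, and from Theorem~\ref{theo:RYmain} that $\RYdIRavc\le\RYrCav\le\RYrICav$; hence it suffices to show $\RYdIRavc\ge R^\star$ and $\RYrICav\le R^\star$. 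For the lower bound I would evaluate the decode‑forward bound \eqref{eq:RYdIRcompound} (with $\Qset=\pSpace(\Sset)$) at $U=X''$ and at input pmfs of the product form $p(x_1)p(x'|x_1)p(x''|x_1)$. Since $Y$ is produced by the state‑free channel $W_{Y|X',X_1}$, which ignores $X''$, and $X',X''$ are conditionally independent given $X_1$, one gets $I_q(U,X_1;Y)=I(X_1;Y)$ and $I_q(X;Y|X_1,U)=I(X';Y|X_1)$, both $q$‑free, so the first branch of \eqref{eq:RYdIRcompound} becomes $I(X_1;Y)+I(X';Y|X_1)=I(X',X_1;Y)$, while $I_q(U;Y_1|X_1)=I_q(X'';Y_1|X_1)$ makes the second branch $\min_{q(s)}I_q(X'';Y_1|X_1)+I(X';Y|X_1)$; maximising over $p(x_1)p(x'|x_1)p(x''|x_1)$ gives $\RYdIRavc\ge R^\star$.

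For the upper bound, I would first show that for every input pmf $p(x',x'',x_1)$ and every $q$,
\[
I_q(X;Y,Y_1|X_1)=I(X';Y|X_1)+I_q(X',X'';Y_1|X_1,Y)\le I(X';Y|X_1)+I_q(X'';Y_1|X_1),
\]
using the chain rule together with the Markov relations $Y-(X',X_1)-X''$ and $(X',Y)-(X_1,X'')-Y_1$; the latter holds because, given $(X_1,X'')$, the pair $(X',Y)$ is a function of $X'$ and the receiver noise, $Y_1$ is a function of the state and the relay noise, and these are mutually independent (the state being independent of the inputs). Since $I_q(X,X_1;Y)=I(X',X_1;Y)$ does not depend on $q$, monotonicity of $\min\{I(X',X_1;Y),\,\cdot\,\}$ and \eqref{eq:RYICcompound} give $\RYrICav\le\inf_{q}\max_{p}g(p,q)$, where $g(p,q)\triangleq\min\{I(X',X_1;Y),\,I(X';Y|X_1)+I_q(X'';Y_1|X_1)\}$ and the inner maximum runs over all $p(x',x'',x_1)$. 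The objective $g(p,q)$ depends on $p$ only through the two marginals $p(x',x_1)$ and $p(x'',x_1)$, and any consistent pair of such marginals is realised by a product‑form pmf, so the maximum over all $p$ equals the maximum over $p(x_1)p(x'|x_1)p(x''|x_1)$; consequently $\max_{p}\min\{I(X',X_1;Y),\,I(X';Y|X_1)+\min_{q}I_q(X'';Y_1|X_1)\}=R^\star$, and it remains only to interchange $\inf_q$ and $\max_p$ in $\inf_q\max_p g(p,q)$. That interchange I would justify via Sion's minimax theorem on the compact convex sets $\pSpace(\Sset)$ and the simplex of input pmfs $p(x',x'',x_1)$: for fixed $q$, $g(\cdot,q)$ is a minimum of two functions concave in the input (mutual informations with a fixed channel are concave in the input), hence continuous and concave; for fixed $p$, $g(p,\cdot)$ is the minimum of the $q$‑free constant $I(X',X_1;Y)$ and $I(X';Y|X_1)+I_q(X'';Y_1|X_1)$, the latter continuous and convex in $q$ (mutual information is convex in a channel affine in $q$), so every sublevel set $\{q:g(p,q)\le t\}$ is convex, \ie $g(p,\cdot)$ is continuous and quasi‑convex. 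Sion's theorem then yields $\inf_q\max_p g(p,q)=\max_p\inf_q g(p,q)=R^\star$, whence $R^\star\le\RYdIRavc\le\RYrCav\le\RYrICav\le R^\star$, forcing equality throughout and proving the first claim.

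For the deterministic‑code statement, by Lemma~\ref{lemm:RYcorrTOdetC} it suffices to exhibit $x_{1,1},x_{1,2}\in\Xset_1$ with $\opC(\avc_1(x_{1,1}))>0$ and $\opC(\avc(x_{1,2}))>0$; then $\RYCavc=\RYrCav$, and the first claim gives $\RYCavc=\RYdIRavc=\RYrICav$. By Definition~\ref{def:RYsymmetrizableGivenX1}, the assumption that $W_{Y_1|X'',X_1,S}$ is non‑symmetrizable‑$\Xset''|\Xset_1$ furnishes some $x_{1,1}$ for which the ordinary DMC $W_{Y_1|X'',X_1=x_{1,1},S}$ is non‑symmetrizable; lifting to the input $X=(X',X'')$, which the channel ignores in its $X'$ coordinate, preserves non‑symmetrizability, so $\opC(\avc_1(x_{1,1}))>0$ by the Csisz\'{a}r--Narayan positivity criterion \cite{CsiszarNarayan:88p}. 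On the receiver side, the assumption $W_{Y|X',X_1}(y|x',x_{1,2})\neq W_{Y|X',X_1}(y|\tx',x_{1,2})$ for some $x_{1,2},x',\tx',y$ says that $\avc(x_{1,2})=\{W_{Y|X',X_1}(\cdot|\cdot,x_{1,2})\}$ is a state‑free DMC whose output genuinely depends on its input, so $\opC(\avc(x_{1,2}))=\max_{p(x')}I(X';Y|X_1=x_{1,2})>0$.

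I expect the minimax step to be the main obstacle: the variable set one is naturally tempted to work with -- product‑form input pmfs -- is not convex, and $g$ is a \emph{minimum} of functions that are convex (not concave) in $q$, so neither Sion's theorem nor the classical von~Neumann minimax theorem applies off the shelf. The plan circumvents this by (i) observing that $g$ depends on the input only through its two marginals, which allows enlarging the feasible set to the full convex simplex at no loss, and (ii) replacing convexity in $q$ by quasi‑convexity, \ie convexity of the sublevel sets. The remaining ingredients -- the chain‑rule identities and the concavity/continuity bookkeeping -- are routine.
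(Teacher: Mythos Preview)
Your proposal is correct and follows essentially the same route as the paper: set $U=X''$ with a product input for achievability, use the chain rule with the Markov relations $Y\Cbar(X',X_1)\Cbar X''$ and $(X',Y)\Cbar(X'',X_1)\Cbar Y_1$ to dominate the cutset term by $I(X';Y|X_1)+I_q(X'';Y_1|X_1)$, and then invoke Sion to close the gap; the deterministic part is the same reduction to Lemma~\ref{lemm:RYcorrTOdetC} (the paper packages this as Corollary~\ref{coro:RYmainDbound}).

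The only notable differences are cosmetic. The paper expands $I_q(X',X'';Y,Y_1|X_1)$ in the order $Y_1$-then-$Y$, whereas you expand $Y$-then-$Y_1$; both yield the same inequality. The paper applies the minimax theorem to the \emph{original} cutset functional and then bounds, while you bound first and apply Sion to the resulting $g(p,q)$; either order works here because in this channel $I_q(X,X_1;Y)$ is $q$-free. You are in fact more careful than the paper on two points it treats tersely: you justify quasi-convexity in $q$ explicitly (the objective is $\min\{\text{constant},\text{convex}\}$, hence has convex sublevel sets), and you argue why the maximum over all $p(x',x'',x_1)$ equals the maximum over product-form inputs (the objective depends only on the two marginals $p(x',x_1)$ and $p(x'',x_1)$, which are always realised by some product pmf). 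The paper simply cites \cite{sion:58p} and asserts the product-form restriction is ``without loss of generality''.
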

The proof of Lemma~\ref{lemm:RYmainOrS} is given in
 Appendix~\ref{app:RYmainOrS}.  To prove Lemma~\ref{lemm:RYmainOrS}, we apply the methods of \cite{ElGamalZahedi:05p} to our results. Specifically, we use the partial decode-forward lower bound in Theorem~\ref{theo:RYmain}, taking $U=X''$  (see (\ref{eq:RYdIRcompound}) and (\ref{eq:RYIRcompoundP})).

\begin{appendices}
\section{Proof of Lemma~\ref{lemm:RYcompoundDF}}
\label{app:RYcompoundDF}
\subsection{Partial Decode-Forward Lower Bound}

We construct a block Markov code, where the backward decoder uses joint typicality  with respect to a state type, which is ``close" to some  $q\in\Qset$. Let $\delta>0$ be arbitrarily small.
%
Define a set of state types $\tQ$ by  
\begin{align}
\label{eq:RYtQ}
\tQ=\{ \hP_{s^n} \,:\; s^n\in\Aset^{ \delta_1 
}(q) \;\text{ for some  $q\in\Qset
$}\, \} \,,
\end{align}
where 
\begin{align}
\label{eq:RYdelta1}
\delta_1 \triangleq
\frac{\delta}{2\cdot |\Sset|} \,.
\end{align} 
Namely, $\tQ$ is the set of types that are $\delta_1$-close 
 to some state distribution $q(s)$ in $\Qset$. 
A code $\code$ for the compound relay channel 
 is constructed as follows.

The encoders use $B$ blocks, each consists of $n$ channel uses to convey $(B-1)$ independent messages to the receiver. Furthermore, each message $M_b$, for $b\in [1:B-1]$,  is divided into two independent messages.
That is, $M_b=(M_b',M_b'')$,  where $M_b'$ and $M_b''$ are uniformly distributed, \ie
\begin{align}
M_b'\sim \text{Unif}[1:2^{nR'}] \,,\; M_b''\sim \text{Unif}[1:2^{nR''}] \,,\;\text{with $R'+R''=R$}\,,
\end{align}
for $b\in [1:B-1]$.
For convenience of notation, set $M_0'=M_B'\equiv 1$ and $M_0''=M_B''\equiv 1$. The average rate $\frac{B-1}{B}\cdot R$ is arbitrarily close to $R$.

\emph{Codebook Generation}: 
 Fix the distribution $P_{U,X,X_1}(u,x,x_1)$, and let 
\begin{align}
\label{eq:RYCompoundDistUY}
P^q_{X,Y,Y_1|U,X_1}(x,y,y_1|u,x_1)=P_{X|U,X_1}(x|u,x_1) \sum_{s\in\Sset} q(s)  \rc(y,y_1|x,x_1,s) \,.
\end{align}
We construct $B$ independent codebooks. 
For $b\in [2:B-1]$, generate $2^{n R'}$ independent sequences $x_{1,b}^n(m_{b-1}')$, $m_{b-1}'\in [1:2^{nR'}]$, at random, each according to $\prod_{i=1}^n P_{X_1}(x_{1,i})$.
Then, generate $2^{nR'}$ sequences,
\begin{align}
u^n_b(m_b'|m_{b-1}') \sim& \prod_{i=1}^n P_{U|X_1}(u_i|x_{1,b,i}(m_{b-1}')) \,,\; m_b'\in [1:2^{nR'}] \,,
\end{align}
 conditionally independent given $x_{1,b}^n(m_{b-1}')$. 
Then, for every $m_b'\in [1:2^{nR'}]$, generate $2^{nR''}$ sequences,
\begin{align}
x_b^n(m_b',m_b''|m_{b-1}') \sim& \prod_{i=1}^n P_{X|U,X_1}(x_i|u_{b,i}(m_b'|m_{b-1}'),x_{1,b,i}(m_{b-1}')) \,,\; m_b''\in [1:2^{nR''}] \,,
\end{align}
 conditionally independent given $(u_b^n(m_b'|m_{b-1}'),x_{1,b}^n(m_{b-1}'))$. 
We have thus generated $B-2$ independent codebooks,
\begin{align}
\Fset_b=\Big\{ \left( x_{1,b}^n(m_{b-1}'), u_b^n(m_b'|m_{b-1}'), x_b^n(m_b',m_b''|m_{b-1}')    \right) \,:\;
 m_{b-1}',m_b'\in [1:2^{nR'}] \,,\; m_b''\in [1:2^{nR''}]
\Big\} \,,
\end{align}
for $b\in [2:B-1]$. The codebooks $\Fset_1$ and $\Fset_B$ are generated in the same manner, with fixed $m_0'=m_B'\equiv 1$ and $m_0''=m_B''\equiv 1$. 
Encoding and decoding is illustrated in Figure~\ref{fig:RYpDFcompound}.

\emph{Encoding}:
To send the message sequence $(m_1',m_1'',\ldots, m_{B-1}',m_{B-1}'')$, transmit $x_b^n(m_b',m_b''|m_{b-1}')$ at block $b$, for 
$b\in [1:B]$. 

\emph{Relay Encoding}:
In block $1$, the relay transmits $x_{1,1}^n(1)$. Set $\tm_0'\equiv 1$.
 At the end of block $b\in [1:B-1]$, the relay receives $y_{1,b}^n$, and
 finds some $\tm_b'\in [1:2^{nR'}]$ such that 
\begin{align}
(u_b^n(\tm_b'|\tm_{b-1}'),x_{1,b}^n(\tm_{b-1}'),y_{1,b}^n)\in\tset(P_{U,X_1} P^{q}_{Y_1|U,X_1}) \,,\; 
\text{for some $q\in\tQ$}\,.
\end{align}
If there is none or there is more than one such, set $\tm_b'=1$. 
 In block $b+1$, the relay transmits $x_{1,b+1}^n(\tm_b')$.

\emph{Backward Decoding}: Once all blocks $(y_b^n)_{b=1}^B$ are received, decoding is performed backwards.
Set $\hm_B'=\hm_B''\equiv 1$. 
For $b=B-1,B-2,\ldots,1$, find 
 a unique $\hm_b'\in[1:2^{nR'}]$ such that
\begin{align}
 (u_{b+1}^n(\hm_{b+1}'|\hm_b'), x_{1,b+1}^n(\hm_b'),y_{b+1}^n)\in\tset(P_{U,X_1} P^{q}_{Y|U,X_1}) \,,\;\text{for some $q\in\tQ$}\,.
\end{align}
If there is none, or more than one such $\hm_b'\in[1:2^{nR'}]$, declare an error.

Then, the decoder uses $\hm_1',\ldots,\hm_{B-1}'$ as follows.
For $b=B-1,B-2,\ldots,1$, find 
 a unique $\hm_b''\in[1:2^{nR''}]$ such that
\begin{align}
 (u_{b}^n(\hm_{b}'|\hm_{b-1}'),x_{b}^n(\hm_{b}',\hm_{b}''|\hm_{b-1}'), x_{1,b}(\hm_{b-1}'),y_{b}^n)\in\tset(P_{U,X,X_1} P^{q}_{Y|X,X_1}) \,,\; \text{for some $q\in\tQ$}\,.
\end{align}  
If there is none, or more than one such $\hm_b''\in[1:2^{nR''}]$, declare an error.
We note that using the set of types $\tQ$ instead of the original set of state distributions $\Qset$ alleviates the analysis, since
 $\Qset$ is not necessarily finite nor countable.

\begin{center}
\begin{figure}

\begin{tabular}{l|ccccc}
Block				& $1$							& $2$									
									& $\cdots$			& $B-1$ 		& $B$\\
								
\\ \hline &&&&& \\ 
Encoder				&	$x_{1}^n(m_1',m_1''|1)$	&	$x_{2}^n(m_2',m_2''|m_1')$		
							& $\cdots$			& $x_{B-1}^n(m_{B-1}',m_{B-1}''|m_{B-2}')$&  $x_{B}^n(1,1|m_{B-1}')$ 
\\&&&&& \\
Relay Decoder			& $\tm_1'\rightarrow$				& $\tm_2'\rightarrow$						
						& $\cdots$			& $\tm_{B-1}'$ & $\emptyset$
\\&&&&& \\
Relay Encoder				&	$x_{1,1}^n(1)$	&	$x_{1,2}^n(\tm_1')$		
							& $\cdots$			& $x_{1,B-1}^n(\tm_{B-2}')$&  $x_{1,B}^n(m_{B-1}')$ \\&&&&& \\
Output 		& $\emptyset$		& $\hm_1'$ 
					& $\cdots$ & $\leftarrow\hm_{B-2}'$ & $\leftarrow\hm_{B-1}'$ 
					\\
					& $\hm_1''$		& $\hm_2''$ 
					& $\cdots$ & $\hm_{B-1}''$ & $\emptyset$
\end{tabular}
\caption{Partial decode-forward coding scheme. The block index $b\in [1:B]$ is indicated at the top. In the following rows, we have the corresponding elements: 
(1) sequences transmitted by the encoder;  
(2) estimated messages at the relay;
(3) sequences transmitted by the relay;
(4) estimated messages at the destination decoder.
The arrows in the second row indicate that the relay encodes forwards with respect to the block index, while the arrows in the fourth row indicate that the receiver decodes backwards.  
}
\label{fig:RYpDFcompound}
\end{figure}
\end{center}

\emph{Analysis of Probability of Error}: 
Assume without loss of generality that the user sent  $(M_b',M_b'')=(1,1)$, and let $q^*(s)\in\Qset$ denote the \emph{actual} state distribution chosen by the jammer.
The error event is bounded by the union of the  events
\begin{align}
\Eset_{1}(b)=&\{ \tM_b'\neq 1 \} \,,\; \Eset_{2}(b)=\{ \hM_b'\neq 1 \} \,,\; \Eset_{3}(b)=\{ \hM_b''\neq 1 \} \,,\; \text{for $b\in [1:B-1]$}\,.
\end{align}
 Then,  the probability of error is bounded by
\begin{align}
 \err(q,\code) 
\leq& \sum_{b=1}^{B-1} \prob{\Eset_1(b)}+ \sum_{b=1}^{B-1} \cprob{\Eset_2(b)}{\Eset_1^c(b) }
+ \sum_{b=1}^{B-1} \cprob{\Eset_3(b)}{\Eset_1^c(b)\cap \Eset_2^c(b) \cap \Eset_2^c(b-1) }  \,,
\label{eq:RYCompoundCerrBound}
\end{align}
with $\Eset_2(0)=\emptyset$, where the conditioning on $(M'_b,M_b'')=(1,1)$ is omitted for convenience of notation.

We begin with the probability of erroneous relaying, $\prob{\Eset_{1}(b)}$. Define
\begin{align}
\Eset_{1,1}(b)=& \{ (  U_b^n(1|\tM_{b-1}'), X_{1,b}^n(\tM_{b-1}'), Y_{1,b}^n  )\notin \tset(P_{U,X_1} P^{q'}_{Y_1|U,X_1})  \;\text{ for all $q'\in\tQ$} \} \nonumber\\
\Eset_{1,2}(b)=& \{ (  U_b^n(m_b'|\tM_{b-1}'), X_{1,b}^n(\tM_{b-1}'), Y_{1,b}^n  )\in \tset(P_{U,X_1} P^{q'}_{Y_1|U,X_1}) \,,\; 
\text{for some $m_b'\neq 1$, $q'\in\tQ$} \} \,.
\label{eq:RYcompuondE11}
\end{align}
For $b\in [1:B-1]$, the relay error event is bounded as 
\begin{align}
\Eset_1(b)\subseteq& \Eset_1(b-1)\cup \Eset_{1,1}(b)\cup \Eset_{1,2}(b) \nonumber\\
=& \Eset_1(b-1)\cup \left( \Eset_1(b-1)^c \cap \Eset_{1,1}(b) \right) \cup
\left( \Eset_1(b-1)^c \cap \Eset_{1,2}(b) \right) \,,
\end{align}
with $\Eset_1(0)=\emptyset$. 
 Thus, by the union of events bound,
\begin{align}
\label{eq:RYcompoundDFE1bound}
\prob{\Eset_1(b)}\leq \prob{\Eset_1(b-1)}+\cprob{ \Eset_{1,1}(b)}{\Eset_1(b-1)^c}
+\cprob{ \Eset_{1,2}(b)}{\Eset_1(b-1)^c} \,.
\end{align}
Consider the second term on the RHS of (\ref{eq:RYcompoundDFE1bound}).
We now claim that given that $\Eset_1(b-1)^c$ occurred, \ie $\tM_{b-1}'=1$,   the event $\Eset_{1,1}(b)$ implies that 
$(  U_b^n(1|1), X_{1,b}^n(1), Y_{1,b}^n  )\notin \Aset^{\nicefrac{\delta}{2}}(P_{U,X_1} P^{q''}_{Y_1|U,X_1})$
	 for all $q''\in\Qset$.	This claim is due to the following. 
Assume to the contrary that $\Eset_{1,1}(b)$ holds, but $(  U_b^n(1|1), X_{1,b}^n(1), Y_{1,b}^n  )\in \Aset^{\nicefrac{\delta}{2}}(P_{U,X_1} P^{q''}_{Y_1|U,X_1})$ for some $q''\in\Qset$.  
Then, for a sufficiently large $n$, there exists a type $q'(s)$ such that 
\begin{align}
\label{eq:RYqdelta1}
|q'(s)-q''(s)|\leq \delta_1 \,, 
\end{align}
 for all $s\in\Sset$, and by the definition in (\ref{eq:RYtQ}), $q'\in\tQ$.  Then, (\ref{eq:RYqdelta1}) implies that
\begin{align}
|P_{Y_1|U,X_1}^{q'}(y_1|u,x_1)-P_{Y_1|U,X_1}^{q''}(y_1|u,x_1)|\leq |\Sset|\cdot \delta_1=\frac{\delta}{2} \,,
\end{align}
for all $u\in\Uset$, $x_1\in\Xset_1$ and $y_1\in\Yset_1$ (see (\ref{eq:RYCompoundDistUY}) and (\ref{eq:RYdelta1})). Hence, 
$(  U_b^n(1|1), X_{1,b}^n(1), Y_{1,b}^n  )\in \tset(P_{U,X_1} P^{q'}_{Y_1|U,X_1})$, which contradicts the first assumption.
It follows that 
	\begin{align}
	\cprob{ \Eset_{1,1}(b)}{\Eset_1(b-1)^c} 
	\leq& \cprob{(  U_b^n(1|1), X_{1,b}^n(1), Y_{1,b}^n  )\notin \Aset^{\nicefrac{\delta}{2}}(P_{U,X_1} P^{q''}_{Y_1|U,X_1}) \;\text{ for all $q''\in\Qset$} }{\Eset_1(b-1)^c} \nonumber \\
	\leq& \cprob{(  U_b^n(1|1), X_{1,b}^n(1), Y_{1,b}^n  )\notin \Aset^{\nicefrac{\delta}{2}}(P_{U,X_1} P^{q^*}_{Y_1|U,X_1})  }{\Eset_1(b-1)^c} \,.
	\label{eq:RYRYSllnRL}
	\end{align}
		Since the codebooks $\Fset_1,\ldots,\Fset_B$ are independent, the sequence $(U_b^n(1|1), X_{1,b}^n(1))$ from the codebook $\Fset_b$ is independent of the relay estimate $\tM_{b-1}$, which is a function of $Y_{1,b-1}^n$ and the codebook $\Fset_{b-1}$. Thus, 
the RHS of (\ref{eq:RYRYSllnRL})  tends to zero exponentially as $n\rightarrow\infty$ by the law of large numbers and  Chernoff's bound.

	We move to the third term in the RHS of (\ref{eq:RYcompoundDFE1bound}). 
	By the union of events bound, the fact that the number of type classes in $\Sset^n$ is bounded by $(n+1)^{|\Sset|}$, and the independence of the codebooks,  we have that  
\begin{align}
\label{eq:RYSE2poly}
&\cprob{ \Eset_{1,2}(b)}{\Eset_1(b-1)^c} 
\leq (n+1)^{|\Sset|}\cdot \sup_{q'\in\tQ} \prob{
(  U_b^n(m_b'|1), X_{1,b}^n(1), Y_{1,b}^n  )\in \tset(P_{U,X_1} P^{q'}_{Y_1|U,X_1}) \;\text{ for some $m_b'\neq 1$} 
} \nonumber\\
\leq& (n+1)^{|\Sset|}\cdot 2^{nR'} \cdot
 \sup_{q'\in\tQ}\left[  \sum_{u^n,x_1^n} P_{U^n,X_1^n}(u^n,x_1^n) \cdot \sum_{y_1^n \,:\; (u^n,x_1^n,y_1^n)\in \tset(P_{U,X_1} P^{q'}_{Y_1|U,X_1})} P_{Y_1^n|X_1^n}^{q^*}(y_1^n|x_1^n)
\right] \,,
\end{align}
where the last line follows since $U_b^n(m_b'|1)$ is conditionally independent of $Y_{1,b}^n$ given $X_{1,b}^n(1)$, for every $m_b'\neq 1$. 
 Let $y_1^n$ satisfy $(u^n,x_1^n,y_1^n)\in \tset(P_{U,X_1} P^{q'}_{Y_1|U,X_1})$. Then, $\,(x_1^n,y_1^n)\in\Aset^{\delta_2}(P_{X_1,Y_1}^{q'})$ with $\delta_2\triangleq |\Uset| \cdot\delta$. By Lemmas 2.6 and 2.7 in \cite{CsiszarKorner:82b},
\begin{align*}
P_{X_1^n,Y_1^n}^{q^*}(x_1^n,y_1^n)=2^{-n\left(  H(\hP_{x_1^n,y_1^n})+D(\hP_{x_1^n,y_1^n}||P_{X_1,Y_1}^{q^*})
\right)}\leq 2^{-n H(\hP_{x_1^n,y_1^n})}
\leq 2^{-n\left( H_{q'}(X_1,Y_1) -\eps_1(\delta) \right)} \,,
\end{align*}
hence,
\begin{align}
\label{eq:RYSpYbound}
P_{Y_1^n|X_1^n}^{q^*}(y_1^n|x_1^n)\leq 2^{-n\left( H_{q'}(Y_1|X_1) -\eps_2(\delta) \right)} \,,
\end{align}
where $\eps_1(\delta),\eps_2(\delta)\rightarrow 0$ as $\delta\rightarrow 0$. Therefore, by (\ref{eq:RYSE2poly})$-$(\ref{eq:RYSpYbound}), along with 
 \cite[Lemma 2.13]{CsiszarKorner:82b},
\begin{align}
& \cprob{ \Eset_{1,2}(b)}{\Eset_1(b-1)^c}         																									
\leq
 \;(n+1)^{|\Sset|}\cdot \sup_{q'\in\Qset} 
2^{-n[ I_{q'}(U;Y_1|X_1) 
-R'-\eps_3(\delta) ]} \label{eq:RYSLexpCR} \,,
\end{align}
with $\eps_3(\delta)\rightarrow 0$ as $\delta\rightarrow 0$.
 Using induction, we have by (\ref{eq:RYcompoundDFE1bound}) that $\prob{\Eset_1(b)}$  tends to zero exponentially as $n\rightarrow\infty$, for $b\in [1:B-1]$, provided that $R'<\inf_{q'\in\Qset} I_{q'}(U;Y_1|X_1)
-\eps_3(\delta)$.

As for the erroneous decoding of $M_b'$ at the receiver, 
observe that given $\Eset_1(b)^c$, the relay sends $X_{1,b}^n(1)$ in block $b+1$, hence
\begin{align} 
( U_{b+1}^n(1|1), X_{b+1}^n(1,1|1), X_{1,b+1}^n(1)
)\sim P_{U,X,X_1}(u,x,x_1) \,. 
\label{eq:RYcompoundDFtrueDist}
\end{align}
At the destination receiver,  decoding is performed backwards, hence the error events 
 have a different form compared to those of the relay (\cf (\ref{eq:RYcompuondE11}) and the events below). 
Define the events,
\begin{align}
\Eset_{2,1}(b)=& \{ (  U_{b+1}^n(\hM_{b+1}'|1), X_{1,b+1}^n(1), Y_{b+1}^n  )\notin \tset(P_{U,X_1} P^{q'}_{Y|U,X_1})  \;\text{ for all $q'\in\tQ$} \} \nonumber\\
\Eset_{2,2}(b)=& \{ (  U_{b+1}^n(\hM_{b+1}'|m_b'), X_{1,b+1}^n(m_b'), Y_{b+1}^n  )\in \tset(P_{U,X_1} P^{q'}_{Y_1|U,X_1}) \,,\; 
\text{for some $m_b'\neq 1$, $q'\in\tQ$} \} 
\end{align}
For $b\in [1:B-1]$, the error event $\Eset_2(b)$ is bounded by 
\begin{align}
\Eset_2(b)\subseteq& \Eset_2(b+1)\cup \Eset_{2,1}(b)\cup \Eset_{2,2}(b) \nonumber\\
=& \Eset_2(b+1)\cup \left( \Eset_2(b+1)^c \cap \Eset_{2,1}(b) \right) \cup
\left( \Eset_2(b+1)^c \cap \Eset_{2,2}(b) \right) \,,
\end{align}
with $\Eset_2(B)=\emptyset$. Thus,
\begin{align}
\cprob{\Eset_2(b)}{\Eset_1(b)^c}\leq& 
\cprob{\Eset_2(b+1)}{\Eset_1(b)^c}+\cprob{\Eset_{2,1}(b)}{\Eset_1(b)^c,\Eset_2(b+1)^c}
 +\cprob{\Eset_{2,2}(b)}{\Eset_1(b)^c,\Eset_2(b+1)^c} \,.
\label{eq:RYcompoundE2b}
\end{align}
By similar arguments to those used above, we have that 
\begin{align}
\cprob{\Eset_{2,1}(b)}{\Eset_1(b)^c,\Eset_2(b+1)^c}\leq
\cprob{(  U_{b+1}^n(1|1), X_{1,b+1}^n(1), Y_{b+1}^n  )\notin \Aset^{\nicefrac{\delta}{2}}(P_{U,X_1} P^{q^*}_{Y|U,X_1}) }{\Eset_1(b)^c} \,,
\label{eq:RYcompoundE21b}
\end{align}
which  tends to zero exponentially as $n\rightarrow\infty$, due to (\ref{eq:RYcompoundDFtrueDist}), and
 by the law of large numbers and Chernoff's bound. Then, by similar arguments to those used for the bound on $\cprob{ \Eset_{1,2}(b)}{\Eset_1(b-1)^c} $, the third term on the RHS of (\ref{eq:RYcompoundE2b}) tends to zero as $n\rightarrow\infty$, provided that 
$R'<\inf_{q'\in\Qset} I_{q'}(U,X_1;Y)-\eps_4(\delta)$, where $\eps_4(\delta)\rightarrow 0$ as $\delta\rightarrow 0$.  Using induction, we have by (\ref{eq:RYcompoundE2b}) that the second term on the RHS of (\ref{eq:RYCompoundCerrBound})
 tends to zero exponentially as $n\rightarrow\infty$, for $b\in [1:B-1]$. 

Moving to the error event for $M_b''$, define
\begin{align}
\Eset_{3,1}(b)=& \{ (  U_b^n(\hM_b'|\hM_{b-1}'), X_{b}^n(\hM_{b}',1|\hM_{b-1}'), X_{1,b}(\hM_{b-1}'),
 Y_{b}^n  )\notin \tset(P_{U,X,X_1} P^{q'}_{Y|X,X_1}) \,,\; 
\text{for all $q'\in\tQ$} \} \nonumber\\
\Eset_{3,2}(b)=& \{ (  U_b^n(\hM_b'|\hM_{b-1}'), X_{b}^n(\hM_{b}',m_b''|\hM_{b-1}'), X_{1,b}(\hM_{b-1}'), Y_{b}^n  )\in \tset(P_{U,X,X_1} P^{q'}_{Y|X,X_1}) \,,\; 
\text{for some $m_b''\neq 1$, $q'\in\tQ$} \} \,.
\end{align}
Given $\Eset_2(b)^c\cap \Eset_2(b-1)^c$, we have that $\hM_b'=1$ and $\hM_{b-1}'=1$. 
Then, by similar arguments to those used above,
\begin{align}
&\cprob{\Eset_3(b)}{\Eset_1(b)^c\cap \Eset_2(b)^c\cap \Eset_2(b-1)^c} \nonumber\\
\leq&  \cprob{\Eset_{3,1}(b)}{\Eset_1(b)^c\cap \Eset_2(b)^c\cap \Eset_2(b-1)^c}
 +\cprob{\Eset_{3,2}(b)}{\Eset_1(b)^c\cap \Eset_2(b)^c\cap \Eset_2(b-1)^c} \nonumber\\
\leq& e^{-a_0 n}
+(n+1)^{|\Sset|}\cdot \sup_{q'\in\Qset} \sum_{m_b''\neq 1}\cprob{(  U_b^n(1|1), X_{b}^n(1,m_b''|1), X_{1,b}(1), Y_{b}^n  )\in \tset(P_{U,X,X_1} P^{q'}_{Y|X,X_1})}{\Eset_1(b)^c}
\nonumber\\
\leq&e^{-a_0 n}+  (n+1)^{|\Sset|}\cdot \sup_{q'\in\Qset} 
2^{-n[ I_{q'}(X;Y|U,X_1) -R''-\eps_5(\delta) ]}
\label{eq:RYcompoundE3b}
\end{align}
where $a_0>0$ and $\eps_5(\delta)\rightarrow 0$ as $\delta\rightarrow 0$. 
The second inequality holds by (\ref{eq:RYcompoundDFtrueDist}) along with the law of large numbers and Chernoff's bound, and the last inequality holds as $X^n_b(1,m_b''|1)$ is conditionally independent of 
$Y_b^n$ given $(U_b^n(1|1),X_{1,b}^n(1))$ for every $m_b''\neq 1$.
Thus, the third term on the RHS of (\ref{eq:RYCompoundCerrBound})  tends to zero exponentially as $n\rightarrow\infty$, provided that $R''< \inf_{q'\in\Qset} I_{q'}(X;Y|U,X_1)-\eps_5(\delta)$.
Eliminating $R'$ and $R''$, we conclude that the probability of error, averaged over the class of the codebooks, exponentially decays to zero  as $n\rightarrow\infty$, provided that $R<\RYdIRcompound$. Therefore, there must exist a $(2^{nR},n,\eps)$ deterministic code, for a sufficiently large $n$.
\qed

\subsection{Cutset Upper Bound}
This is a straightforward consequence of the cutset bound in \cite{CoverElGamal:79p}.
Assume to the contrary that there exists an achievable rate $R>\RYICcompound$. Then, for some 
$q^*(s)$ in the closure of $\Qset$, 
\begin{align}
\label{eq:cutsetCompound1}
R>\max_{p(x,x_1)} 
\min \left\{ I_{q^*}(X,X_1;Y) \,,\; I_{q^*}(X;Y,Y_1|X_1)
\right\} \,.
\end{align}
By the achievability assumption, we have that for every $\eps>0$ and sufficiently large $n$, there exists a $(2^{nR},n)$ random code $\code^\Gamma$ such that $\err(q,\code)\leq\eps$ for every i.i.d. state distribution
 $q\in\Qset$, and in particular for $q^*$. This holds even if $q^*$ is in the closure of $\Qset$ but not in $\Qset$ itself, since $\err(q,\code)$ is continuous in $q$.
Consider using this code over a standard relay channel $W_{Y,Y_1|X,X_1}$ without a state, where
$
W_{Y,Y_1|X,X_1}(y,y_1|x,x_1)=\sum_{s\in\Sset} q^*(s) \rc(y,y_1|x,x_1,s) 
$. 
It follows that the rate $R$ as in (\ref{eq:cutsetCompound1}) can be achieved over the relay channel $W_{Y,Y_1|X,X_1}$, in contradiction to \cite{CoverElGamal:79p}. We deduce that the assumption is false, and $R>\RYICcompound$ cannot be achieved.
\qed

\section{Proof of Corollary~\ref{coro:RYcompoundDeg}}
\label{app:RYcompoundDeg}
This is a straightforward consequence of Lemma~\ref{lemm:RYcompoundDF}, which states that
the capacity of the compound relay channel is bounded by 
$\RYdIRcompound\leq\RYCcompound\leq \RYICcompound$.
Thus, if $\rc$ is reversely degraded such that $\rc=W_{Y|X,X_1} W_{Y_1|Y,X_1,S}$, then  $I_q(X;Y,Y_1|X_1)=I_q(X;Y|X_1)$, and
the bounds coincide by the minimax theorem \cite{sion:58p}, \cf (\ref{eq:RYICcompound}) and (\ref{eq:RYcompoundDirectTran}).
Similarly, if $\rc$ is degraded such that $\rc=W_{Y_1|X,X_1} W_{Y|Y_1,X_1,S}$, then  $I_q(X;Y,Y_1|X_1)=I(X;Y_1|X_1)$, and by (\ref{eq:RYICcompound}) and (\ref{eq:RYcompoundFullDF}), 
\begin{align}
\RYICcompound=& \min_{q(s)\in\Qset}  \max_{p(x,x_1)} 
\min \left\{ I_q(X,X_1;Y) \,,\; I(X;Y_1|X_1)
\right\} \ \,, 
\label{eq:RYdegradedICcompound}
\\
\RYdIRcompound=&   \max_{p(x,x_1)} \min_{q(s)\in\Qset} 
\min \left\{ I_q(X,X_1;Y) \,,\; I(X;Y_1|X_1)
\right\} \,.
\label{eq:RYdegradedIRcompound}
\end{align} 
Observe that $\min \left\{ I_q(X,X_1;Y) \,,\; I(X;Y_1|X_1)
\right\}$ is concave in $p(x,x_1)$ and quasi-convex in $q(s)$ (see \eg \cite[Section 3.4]{BoydVandenbergh:04b}), hence  the bounds (\ref{eq:RYdegradedICcompound}) and (\ref{eq:RYdegradedIRcompound}) coincide by the minimax theorem \cite{sion:58p}.
\qed

\section{Proof of Corollary~\ref{coro:RYcompoundDFb}}
\label{app:RYcompoundDFb}
Consider the block-compound relay channel $\avrc^{\Qset\times B}$, where the state distribution $q_b\in\Qset$
varies from block to block. Since the encoder, relay and receiver are aware of this jamming scheme, the capacity is the same as that of the ordinary compound channel, \ie $\opC(\avrc^{\Qset\times B})=\RYCcompound$ and
$\opC^{\rstarC}(\avrc^{\Qset\times B})=\RYrCcompound$. Hence, 
(\ref{eq:RYcDFb}) and (\ref{eq:RYcCSb}) follow from Lemma~\ref{lemm:RYcompoundDF}. As for the second part of the corollary, observe that the block Markov coding scheme used in the proof of the decode forward lower bound  can be applied as is to the block-compound relay channel, since the relay and the destination receiver do not estimate the state distribution while decoding the messages (see Appendix~\ref{app:RYcompoundDF}). Furthermore, the analysis also holds, where the actual state distribution $q^*$, in (\ref{eq:RYRYSllnRL})--(\ref{eq:RYSpYbound}) and (\ref{eq:RYcompoundE21b}),  is now replaced  by the state distribution $q^*_b$ which corresponds to block $b\in [1:B]$.
\qed

\section{Proof of Theorem~\ref{theo:RYmain}}
\label{app:RYmain}
First, we explain the general idea. 
We modify Ahlswede's Robustification Technique (RT)  \cite{Ahlswede:86p} to the relay channel. Namely, we use codes for the compound  relay channel to construct a random code for the AVRC using randomized permutations. However, in our case, the strictly causal nature of the relay imposes a difficulty, and the application of the RT is not straightforward.

In \cite{Ahlswede:86p}, there is noncausal state information  and a random code is defined via permutations of the codeword symbols and the received sequence. Here, however, the relay cannot apply permutations to  its transmission $x_1^n$, because it depends on the received sequence $y_1^n$ in a strictly causal manner. 
We resolve this difficulty using block Markov codes for the block-compound  relay channel to construct a random code for the AVRC, applying $B$ 
 in-block permutations to the relay transmission, which depends only on the sequence received in the \emph{previous block}. The details are given below.

\subsection{Partial Decode Forward Lower Bound}
We show that every rate $R<\RYdIRavc$  (see (\ref{eq:RYIRcompoundP})) can be achieved by random codes over the AVRC $\avrc$, 
\ie $\RYCavc \geq \RYdIRavc$.
We start with Ahlswede's RT \cite{Ahlswede:86p}, stated below. Let $h:\Sset^n\rightarrow [0,1]$ be a given function. If, for some fixed $\alpha_n\in(0,1)$, and for all 
$ \qn(s^n)=\prod_{i=1}^n q(s_i)$, with 
$q\in\pSpace(\Sset)$, 
\begin{align}
\label{eq:RYRTcondC}
\sum_{s^n\in\Sset^n} \qn(s^n)h(s^n)\leq \alpha_n \,,
\end{align}
then,
\begin{align}
\label{eq:RYRTresC}
\frac{1}{n!} \sum_{\pi\in\Pi_n} h(\pi s^n)\leq \beta_n \,,\quad\text{for all $s^n\in\Sset^n$} \,,
\end{align}
where $\Pi_n$ is the set of all $n$-tuple permutations $\pi:\Sset^n\rightarrow\Sset^n$, and 
$\beta_n=(n+1)^{|\Sset|}\cdot\alpha_n$. 

According to Corollary~\ref{coro:RYcompoundDFb}, 
 for every $R<\RYdIRavc$, there exists a  $(2^{nR(B-1)},$  $nB,$ $e^{-2\theta n})$ block Markov code for the block-compound relay channel $\avrc^{\pSpace(\Sset)\times B}$ 
for some $\theta>0$ and sufficiently large $n$, where $B>0$ is arbitrarily large. 
Recall that the code constructed in the proof in Appendix~\ref{app:RYcompoundDF} has the following form.
The encoders use $B>0$ blocks to convey $B-1$ messages $m_b$, $b\in [1:B-1]$. Each message consists of two parts, \ie $m_b=(m_b',m_b'')$, where $m_b'\in[1:2^{nR'}]$ and $m_b''\in[1:2^{nR''}]$.
In block $b\in [1:B]$, the encoder sends $x_b^n=f_b(m_b',m_b''|m_{b-1}')$, with fixed $m_0$ and $m_B$, 
 and the relay transmits $x_{1,b}^n=f_{1,b}(y_{1,b-1}^n)$, using the sequence received in the previous block.
After receiving the entire output sequence $(y_b^n)_{b=1}^B$, 
the decoder finds an estimate for the messages.
Set $\hm_B'=1$. The first part of each message is decoded backwards as 
 $\hm_b'=g_b'(y_{b+1}^n,\hm_{b+1}')$, for $b=B-1,B-2,\ldots,1$. 
Then, the second part of each message is decoded as $\hm_b''=g_b''(y_b^n,\hm_1',\ldots,\hm_{B-1}')$,
for $b\in [1:B-1]$.
The overall blocklength is then $n\cdot B$ and the average rate is $\frac{B-1}{B} (R'+R'')$.

Given such a block Markov code $\code_{BM}$ for the block-compound relay channel $\avrc^{\pSpace(\Sset)\times B}$, 
 we have that 
\begin{align}
&\text{Pr}_{\,\code_{BM}}\hspace{-0.1cm} \left(\Eset_b' \,|\; (\Eset_{b+1}')^c \right) 
 \leq e^{-2\theta n} \,,\;
\text{Pr}_{\,\code_{BM}}\hspace{-0.1cm} \left(\Eset_b''\,|\;  \Eset_1'^c,\ldots,\Eset_{b-1}'^c \right) 
 \leq e^{-2\theta n}
\end{align}
for $b=B-1,\ldots,1$, where $\Eset_{0}'=\Eset_{B}'=\emptyset$, and
$\Eset_b'=\{ \hM_b'\neq M_b' \}$,  
$\Eset_b''=\{ \hM_b''\neq M_b'' \}$, 
$b\in [1:B-1]$. 
That is, for every sequence of state distributions $q_1,\ldots,q_{b+1}$, where  $ \qn_t(s_t^n)=\prod_{i=1}^n q_t(s_{t,i})$ for
$t\in [1:b+1]$,
\begin{align}
&\sum_{s_{1}^n\in\Sset^n} \qn_{1}(s_{1}^n)\sum_{s_{2}^n\in\Sset^n} \qn_{2}(s_{2}^n)\
\cdots\sum_{s_{b+1}^n\in\Sset^n} \qn_{b+1}(s_{b+1}^n)\cdot h_b'(s_{1}^n,s_{2}^n,\ldots, s_{b+1}^n) \leq e^{-2\theta n} \,,\;
\label{eq:RYrCh1eps}
\intertext{and}\;
&\sum_{s_{1}^n\in\Sset^n} \qn_{1}(s_{1}^n)\sum_{s_{2}^n\in\Sset^n} \qn_{2}(s_{2}^n)\
\cdots\sum_{s_{b}^n\in\Sset^n} \qn_{b}(s_{b}^n) \cdot h_b''(s_{1}^n,s_{2}^n,\ldots,s_{b}^n) \leq e^{-2\theta n} \,,
\label{eq:RYrCh2eps}
\end{align}
where
\begin{align}
&h_b'(s_{1}^n,s_{2}^n,\ldots, s_{b+1}^n)=  
 \frac{1}{2^{n(b+1)(R'+R'')}}\sum_{(m_1',m_1''),\ldots,(m_{b+1}',m_{b+1}'')}  \nonumber\\&
\sum_{y_{1,b}^n\in\Yset_1^n} \cprob{Y_{1,b}^n=y_{1,b}^n}{
(M_1',M_1'')=(m_1',m_1''),\ldots,(M_b',M_b'')=(m_b',m_b''),S_1^n=s_{1}^n,\ldots, S_b^n= s_{b}^n} \nonumber\\ &\;\times
\sum_{y_{b+1}^n : g_b'(y_{b+1}^n,m_{b+1}')\neq m_b' } W_{Y^n|X^n,X_{1}^n,S^n}(y_{b+1}^n| 
f_{b+1}(m_{b+1}',m_{b+1}''|m_b'),f_{1,b+1}(y_{1,b}^n),s_{b+1}^n  )
\label{eq:RYrch1}
\intertext{and}
&h_b''(s_{1}^n,s_{2}^n,\ldots,s_{b}^n) =
 \frac{1}{2^{nR''}}\sum_{m_{b}''=1}^{2^{nR''}} 
 \frac{1}{2^{nR'(B-1)}}\sum_{m_1',\ldots,m_{B-1}'}
\nonumber\\ &\;
\sum_{y_{1,b-1}^n\in\Yset_1^n} \cprob{Y_{1,b-1}^n=y_{1,b-1}^n}{(M_1',M_1'')=(m_1',m_1''),\ldots,
(M_{b-1}',M_{b-1}'')=(m_{b-1}',m_{b-1}''),
S_1^n=s_1^n,\ldots, S_{b-1}^n=s_{b-1}^n } \nonumber\\ &\;\times
\sum_{y_{b}^n,y_{1,b}^n : g_b''(y_{b}^n,m_1',\ldots,m_{B-1}')\neq m_b'' } W_{Y^n|X^n,X_{1}^n,S^n}
(y_{b}^n| 
f_{b}(m_{b}',m_{b}''|m_{b-1}'),f_{1,b}(y_{1,b-1}^n),s_{b}^n  ) \,.
\label{eq:RYrch2}
\end{align}
The conditioning in the equations above can be explained as follows.
In (\ref{eq:RYrch1}), due to the code construction, the sequence $Y_{1,b}^n$ received at the relay in block
$b\in [1:B]$ depends only on the messages $(M_t',M_t'')$ with $t\leq b$. The decoded message $\hM_b'$, at the destination receiver, depends on messages $M_t'$ with $t>b$, since the receiver decodes this part of the message backwards. In (\ref{eq:RYrch2}), since the second part of the message $M_b''$ is decoded 
after backward decoding is complete, the estimation of $M_b''$ at the decoder depends on  the entire  sequence $\hM_1',\ldots,\hM_{B-1}'$.
%
By (\ref{eq:RYrCh1eps})--(\ref{eq:RYrCh2eps}), for every $t\in [1:b]$,  $h_b'$ and $h_b''$ as  functions of $s_{t+1}^n$ and $s_{t}^n$, respectively, satisfy (\ref{eq:RYRTcondC}) with $\alpha_n=e^{-2\theta n}$, given that the state sequences in the other blocks are fixed.
Hence, applying  Ahlswede's RT recursively, we obtain
\begin{align}
&\frac{1}{(n!)^{b+1}} \sum_{\pi_1,\pi_2,\ldots,\pi_{b+1}\in\Pi_n} h_b'(\pi_1 s_{1}^n, \pi_2 s_{2}^n,\ldots, 
\pi_{b+1} s_{b+1}^n)\leq 
(n+1)^{B|\Sset|}e^{-2\theta n} 
\leq e^{-\theta n}  \,,
 \,, \nonumber\\
&\frac{1}{(n!)^b} \sum_{\pi_1,\pi_2,\ldots,\pi_{b}\in\Pi_n} h_b''(\pi_1 s_{1}^n, \pi_2 s_{2}^n,\ldots, 
\pi_{b} s_{b}^n)\leq (n+1)^{B|\Sset|}e^{-2\theta n} 
\leq e^{-\theta n}  \,,
\label{eq:RYdetErrC}
\end{align} 
for all $(s_1^n,s_2^n,\ldots,s_{b+1}^n)\in\Sset^{(b+1)n}$ and sufficiently large $n$, such that $(n+1)^{B|\Sset|}\leq e^{\theta n}$.  

On the other hand, for every $\pi_1,\pi_2,\ldots,\pi_{b+1}\in\Pi_n$, we have that  
\begin{align}
h_b'(\pi_1 s_{1}^n, \pi_2 s_{2}^n,\ldots, 
\pi_{b+1} s_{b+1}^n)=\E\; h_b'(\pi_1 s_{1}^n, \pi_2 s_{2}^n,\ldots, 
\pi_{b+1} s_{b+1}^n|M_{t}',M_{t}'', t=1,\ldots,b+1) \,,
\end{align}
with
\begin{align}
&h_b'(\pi_1 s_{1}^n, \pi_2 s_{2}^n,\ldots, 
\pi_{b+1} s_{b+1}^n|m_{t}',m_{t}'', t=1,\ldots,b+1)																										\nonumber\\
=& \sum_{y_{1,1},\ldots,y_{1,b}}  \prod_{t=0}^{b-1} 
 W_{Y_1^n|X^n,X_1^n,S^n}(y_{1,t+1}^n|f_{t+1}(m_{t+1}',m_{t+1}''|m_t'),f_{1,t+1}(y_{1,t}^n),\pi_{t+1} s_{t+1}^n)
\nonumber\\
&\times\sum_{y_{b+1}^n : g_b'(y_{b+1}^n,m_{b+1}')\neq m_b' } W_{Y^n|X^n,X_{1}^n,S^n}(y_{b+1}^n| 
f_{b+1}(m_{b+1}',m_{b+1}''|m_b'),f_{1,b+1}(y_{1,b}^n),\pi_{b+1} s_{b+1}^n  )\nonumber\\
 \stackrel{(a)}{=}&
\sum_{y_{1,1},\ldots,y_{1,b}} \prod_{t=0}^{b-1}  
 W_{Y_1^n|X^n,X_1^n,S^n}(\pi_{t+1} y_{1,t+1}^n|f_{t+1}(m_{t+1}',m_{t+1}''|m_t'),f_{1,b+1}(\pi_t y_{1,t}^n),
\pi_{t+1} s_{t+1}^n)
\nonumber\\
&\times\sum_{y_{b+1}^n : g_b'(\pi_{b+1} y_{b+1}^n,m_{b+1}')\neq m_b' } W_{Y^n|X^n,X_{1}^n,S^n}(\pi_{b+1} y_{b+1}^n| 
f_{b+1}(m_{b+1}',m_{b+1}''|m_b'),f_{1,b+1}(\pi_b y_{1,b}^n),\pi_{b+1} s_{b+1}^n  )\nonumber\\
 \stackrel{(b)}{=}&
\sum_{y_{1,1},\ldots,y_{1,b}} 
 \prod_{t=0}^{b-1} 
 W_{Y_1^n|X^n,X_1^n,S^n}( y_{1,t+1}^n|\pi_{t+1}^{-1} f_{t+1}(m_{t+1}',m_{t+1}''|m_t'),\pi_{t+1}^{-1} f_{1,b+1}(\pi_t y_{1,t}^n), s_{t+1}^n)
\nonumber\\
&\times\sum_{y_{b+1}^n : g_b'(\pi_{b+1} y_{b+1}^n,m_{b+1}')\neq m_b' } W_{Y^n|X^n,X_{1}^n,S^n}
( y_{b+1}^n| \pi_{b+1}^{-1} f_{b+1}(m_{b+1}',m_{b+1}''|m_b'), \pi_{b+1}^{-1} f_{1,b+1}(\pi_b y_{1,b}^n), s_{b+1}^n  )  \,,
\label{eq:RYcerrpi1}
\end{align}
where $(a)$ is obtained by  
 changing the order of summation over $y_{1,1}^n,\ldots,y_{1,b}^n$ and $y_{b+1}^n$; and $(b)$ holds because the relay channel is memoryless.
Similarly,  
\begin{align}
h_b''(\pi_1 s_{1}^n, \pi_2 s_{2}^n,\ldots, \pi_{b} s_{b}^n)=
\E h_b''(\pi_1 s_{1}^n, \pi_2 s_{2}^n,\ldots, 
\pi_{b} s_{b}^n|M_{1}',\ldots,M_{B-1}', M_t'',t=1,\ldots,b) \,,
\end{align}
with
\begin{align}
&h_b''(\pi_1 s_{1}^n, \pi_2 s_{2}^n,\ldots, \pi_{b} s_{b}^n|m_{1}',\ldots,m_{B-1}', m_t'',t=1,\ldots,b)			\nonumber\\
=&  \sum_{y_{1,1},\ldots,y_{1,b-1}} \prod_{t=1}^{b-1}   
 W_{Y_1^n|X^n,X_1^n,S^n}(y_{1,t}^n|f_{t}(m_{t}',m_{t}''|m_{t-1}'),f_{1,t}(y_{1,t}^n),\pi_{t} s_{t}^n)
\nonumber\\
&\times
 \sum_{y_{b}^n : g_b''(y_{b}^n,m_1',\ldots,m_{B-1}')\neq m_b'' } W_{Y^n|X^n,X_1^n,S^n}(y_{b}^n| 
f_{b}(m_{b}',m_{b}''|m_{b-1}'),f_{1,b}(y_{1,b-1}^n),\pi_b s_{b}^n  )\nonumber\\
 \stackrel{(a)}{=}&
\sum_{y_{1,1},\ldots,y_{1,b-1}}  \prod_{t=1}^{b-1}   
 W_{Y_1^n|X^n,X_1^n,S^n}(\pi_t y_{1,t}^n|f_{t}(m_{t}',m_{t}''|m_{t-1}'),f_{1,t}(\pi_{t-1} y_{1,t-1}^n),\pi_{t} s_{t}^n)
\nonumber\\
&\times
 \sum_{y_{b}^n : g_b''(\pi_b y_{b}^n,m_1',\ldots,m_{B-1}')\neq m_b'' } W_{Y^n|X^n,X_1^n,S^n}(\pi_b y_{b}^n| 
f_{b}(m_{b}',m_{b}''|m_{b-1}'),f_{1,b}(\pi_{b-1} y_{1,b-1}^n),\pi_b s_{b}^n  )\nonumber\\
 \stackrel{(b)}{=}&
 \sum_{y_{1,1},\ldots,y_{1,b-1}}  \prod_{t=1}^{b-1}
 W_{Y_1^n|X^n,X_1^n,S^n}( y_{1,t}^n|\pi_t^{-1} f_{t}(m_{t}',m_{t}''|m_{t-1}'), \pi_t^{-1} f_{1,t}(\pi_{t-1} y_{1,t-1}^n),  s_{t}^n)
\nonumber\\
&\times
 \sum_{y_{b}^n : g_b''(\pi_b y_{b}^n,m_1',\ldots,m_{B-1}')\neq m_b'' } W_{Y^n|X^n,X_1^n,S^n}( y_{b}^n| 
\pi_b^{-1} f_{b}(m_{b}',m_{b}''|m_{b-1}'),\pi_b^{-1} f_{1,b}(\pi_{b-1} y_{1,b-1}^n), s_{b}^n  )  \,.
\label{eq:RYcerrpi2}
\end{align}
%
Then, consider the $(2^{nR(B-1)},nB)$ random Markov block code $\code_{BM}^\Pi$, specified by 
\begin{subequations}
\label{eq:RYCpi}
\begin{align}
&f_{b,\pi}(m_b',m_b''|m_{b-1}')= \pi_b^{-1} f_{b}(m_{b}',m_{b}''|m_{b-1}') \,,\quad
f_{1,b,\pi}(y_{1,b-1}^n)=\pi_b^{-1} f_{1,b}(\pi_{b-1} y_{1,b-1}^n) \,,
\intertext{and}
&g_{b,\pi}'(y_{b+1}^n,\hm_{b+1}')=g_b'(\pi_{b+1} y_{b+1}^n,\hm_{b+1}')
\,,\quad
 g_{b,\pi}''(y_{b}^n,\hm_{1}',\ldots,\hm_{B-1}')=g_b''(\pi y_{b}^n,\hm_1',\ldots,\hm_{B-1}') \,,
\end{align}
\end{subequations}
for $\pi_1,\ldots,\pi_B\in\Pi_n$,
with a uniform distribution $\mu(\pi_1,\ldots,\pi_B)=\frac{1}{|\Pi_n|^B}=\frac{1}{(n!)^B}$. 
That is, a set of $B$ independent permutations is chosen at random and applied to all blocks simultaneously, while the order of the blocks remains intact. 
As we restricted ourselves to a block Markov code, the relaying function in a given block depends only on symbols received in the previous block, hence, the relay can implement
those in-block permutations, and the coding scheme does not violate the causality requirement. 

 From (\ref{eq:RYcerrpi1}) and (\ref{eq:RYcerrpi2}), 
we see that using the random code $\code_{BM}^\Pi$, the error probabilities for the messages $M_b'$ and $M_b''$ are given by  
\begin{align}
&\text{Pr}_{\,\code_{BM}^\Pi}\hspace{-0.1cm} \left(\Eset_b' \,|\; (\Eset_{b+1}')^c, S_{1}^n=s_{1}^n,\ldots,  S_{b+1}^n=s_{b+1}^n \right) = \sum_{\pi_1,\ldots,\pi_B \in\Pi_n} \mu(\pi_1,\ldots,\pi_B) h_b'(\pi_1 s_{1}^n, \pi_2 s_{2}^n,\ldots, 
\pi_{b+1} s_{b+1}^n) \,,\nonumber\\
&\text{Pr}_{\,\code_{BM}^\Pi}\hspace{-0.1cm} \left(\Eset_b''\,|\;  \Eset_1'^c,\ldots,\Eset_{B-1}'^c,S_{1}^n=s_{1}^n,\ldots ,S_{b}^n=s_{b}^n  \right)  = \sum_{\pi_1,\ldots,\pi_B \in\Pi_n} \mu(\pi_1,\ldots,\pi_B) h_b''(\pi_1 s_{1}^n, \pi_2 s_{2}^n,\ldots, 
\pi_{b} s_{b}^n) \,,
\end{align}
for all $s_1^n,\ldots,s_{b+1}^n\in\Sset^n$, $b\in [1:B-1]$, and therefore, together with (\ref{eq:RYdetErrC}), we have that the probability of error of the random code $\code_{BM}^\Pi$ is bounded by 
$
\err(\qn,\code_{BM}^\Pi)\leq e^{-\theta n} 
$, 
for every $\qn(s^{nB})\in\pSpace(\Sset^{nB})$. That is, $\code_{BM}^\Pi$ is a $(2^{nR(B-1)},nB,e^{-\theta n})$ random 
 code for the AVRC $\avrc$, where the overall blocklength is $nB$, and the average rate $\frac{B-1}{B}\cdot R$ tends to $R$ as $B\rightarrow\infty$. 
This completes the proof of the partial decode-forward lower bound. 
\qed 

\subsection{Cutset Upper Bound}
The proof immediately follows from Lemma~\ref{lemm:RYcompoundDF},
since the random code capacity of the AVRC is bounded by the random code capacity  of the compound  relay channel, \ie
$
\RYrCav \leq \RYrCcompoundP 
$. 
\qed 

\section{Proof of Lemma~\ref{lemm:RYcorrTOdetC}}
\label{app:RYcorrTOdetC}
 We use the approach of \cite{Ahlswede:78p}, with the required adjustments. We use the random code constructed in the proof of  Theorem~\ref{theo:RYmain}. Let $R<\RYrCav$, and consider the case where the marginal sender-relay and sender-receiver AVCs have positive capacity, \ie
\begin{align}
\label{eq:RYrpos}
\opC(\avc_1(x_{1,1}))>0 \,,\;\text{and}\; \opC(\avc(x_{1,2}))>0 \,,
\end{align}
for some $x_{1,1},x_{1,2}\in\Xset_1$ (see (\ref{eq:RYmarginAVCs})).
By Theorem~\ref{theo:RYmain}, 
  for every $\eps>0$ and sufficiently large $n$, 
 there exists a $(2^{nR},n,\eps)$ random  code  
$
\code^\Gamma=\big(\mu(\gamma)=\frac{1}{k},\Gamma=[1:k],\{\code_\gamma \}_{\gamma\in \Gamma}\big) 
$, 
where
$\code_\gamma=(\encn_\gamma,\enc_{1,\gamma},\dec_{\gamma})$, for $\gamma\in\Gamma$. 
Following Ahlswede's Elimination Technique \cite{Ahlswede:78p}, it can be assumed that the size of the code collection is bounded by 
$k=|\Gamma|\leq n^2 $. 
By (\ref{eq:RYrpos}),  we have that for every $\eps'>0$ and sufficiently large $\nu'$, the code index $\gamma\in [1:k]$ can be sent through the relay channel $W_{Y_1|X,X_1,S}$ using a $(2^{\nu'\bR'},\nu',\eps')$ deterministic code 
$ 
\code_{\text{i}}'=(\tf^{\nu'},\gnu')  
$, where $\bR'>0$, while the relay repeatedly transmits the symbol $x_{1,1}$. 
Since $k$ is at most polynomial, 
 the encoder can reliably convey $\gamma$ to the relay with a negligible blocklength, \ie
$ 
\nu'=o(n) 
$. 
Similarly, there exists $(2^{\nu''\bR''},\nu'',\eps'')$ code $ 
\code_{\text{i}}''=(\tf^{\nu''},\gnu'')  
$ for the transmission of $\gamma\in [1:k]$  through the channel $W_{Y|X,X_1,S}$ to the receiver, where
$\nu''=o(n)$ and $\bR''>0$, while the relay repeatedly transmits the symbol $x_{1,2}$. 

Now, consider a code 
 formed by the concatenation of $\code_{\text{i}}'$ and $\code_{\text{i}}''$ as consecutive  prefixes to a corresponding code in the code collection $\{\code_\gamma\}_{\gamma\in\Gamma}$. 
That is, the encoder first sends the index $\gamma$ to the relay and the receiver, and then it sends the message $m\in [1:2^{nR}]$ to the receiver. Specifically, the encoder first transmits the $(\nu'+\nu'')$-sequence $(\tf^{\nu'}(\gamma),\tf^{\nu''}(\gamma))$ to convey the index $\gamma$, while the relay transmits the $(\nu'+\nu'')$-sequence $(\tx_1^{\nu'},\tx_1^{\nu''})$, where $\tx_1^{\nu'}=(x_{1,1},x_{1,1},\ldots,x_{1,1})$ and
$\tx_1^{\nu''}=(x_{1,2},x_{1,2},\ldots,x_{1,2})$. At the end of this transmission, the relay uses the first $\nu'$ symbols it received to  estimate the code index as $\hgamma'=\tg'(\ty_1^{\nu'})$. 

Then, the message $m$ is transmitted by the codeword $x^n=\enc_\gamma(m)$, while the relay transmits $x_1^n=\enc_{1,\hgamma'}^n(y_1^n)$. Subsequently, decoding is performed in two stages as well; the decoder estimates the index at first, with  
$\hgamma''=$ $\gnu''(\ty^{\nu''})$, and the message is then estimated by  
$\widehat{m}=$ $g_{\hgamma''}(y^n)$.  
 By the union of events bound, the probability of error 
 is then bounded by $\eps_c=\eps+\eps'+\eps''$, for every joint distribution in $\pSpace(\Sset^{\nu'+\nu''+n})$. 
That is, the concatenated code is a $(2^{(\nu'+\nu''+n)\tR_n},\nu'+\nu''+n,\eps_c)$ code over the AVRC $\avrc$, where 
  the blocklength is $n+o(n)$, and  
the rate   $\tR_{n}=\frac{n}{\nu'+\nu''+n}\cdot R$  approaches $R$ as $n\rightarrow \infty$. 
\qed

\section{Proof of Corollary~\ref{coro:RYmainDbound}}
\label{app:RYmainDbound}
Consider part 1.
By Definition~\ref{def:RYsymmetrizableGivenX1}, if
$W_{Y_1|X,X_1,S}$ and $W_{Y|X,X_1,S}$ are not symmetrizable -$\Xset|\Xset_1$ then there exist
$x_{1,1},x_{1,2}\in\Xset_1$ such that the DMCs $W_{Y_1|X,X_1,S}(\cdot|\cdot,$ $x_{1,1},$ $\cdot)$ and
$W_{Y|X,X_1,S}(\cdot|\cdot,x_{1,2},\cdot)$ are non-symmetrizable in the sense of \cite[Definition 2]{CsiszarNarayan:88p}.
This, in turn, implies that $\opC(\avc_1(x_{1,1}))>0$ and $\opC(\avc(x_{1,2}))>0$, 
due to \cite[Theorem 1]{CsiszarNarayan:88p}. Hence, by Lemma~\ref{lemm:RYcorrTOdetC},
$\RYCavc=\RYrCav$, and by Theorem~\ref{theo:RYmain}, $
\RYdIRavc \leq \RYCavc\leq\RYrICav$. 
%
 Parts 2 and 3 immediately follow from part 1 and Corollary~\ref{coro:RYavrcDeg}.
\qed

\section{Proof of Lemma~\ref{lemm:RYzeroCsymm}}
\label{app:RYzeroCsymm}
The proof is based on \cite{Ericson:85p}. 
Let $\avrc$ be a symmetrizable-$\Xset|\Xset_1$. Assume to the contrary that  a positive rate $R>0$ can be achieved. That is, for every $\eps>0$ and sufficiently large $n$, there exists a $(2^{nR},n,\eps)$
 code $\code=(f,f_1,g)$. Hence, the size of the message set is at least $2$, \ie
\begin{align}
\label{eq:RYzeroCmsize2}
\dM\triangleq 2^{nR} \geq 2 \,.
\end{align}  
We now show that there exists a distribution $q(s^n)$ such that the probability of error $\err(q,\code)$ is bounded from below by a positive constant, in contradiction to the assumption above.

By Definition~\ref{def:RYsymmetrizableGivenX1},  there exists a conditional distribution $J(s|x)$ that satisfies (\ref{eq:symmetrizableGivenX1}). Then, consider the 
 state sequence distribution
$
q(s^n)=\frac{1}{\dM} \sum_{m=1}^\dM J^n(s^n|x^n(m)) 
$, 
where $J^n(s^n|x^n)=\prod_{i=1}^n J(s_i|x_i)$ and $x^n(m)=f(m)$. For this distribution, the probability of error is given by
\begin{align}
\err(q,\code)
=& \sum_{s^n\in\Sset^n} \left[\frac{1}{\dM} \sum_{\tm=1}^\dM J^n(s^n|x^n(\tm)) \right]\cdot
\frac{1}{\dM} \sum_{m=1}^\dM \sum_{(y^n,y_1^n):g(y^n)\neq m } W^n(y^n,y_1^n|x^n(m),f_1^n(y_1^n),s^n) \nonumber\\
=& \frac{1}{2\dM^2} \sum_{m=1}^\dM \sum_{\tm=1}^\dM \sum_{(y^n,y_1^n):g(y^n)\neq m }\sum_{s^n\in\Sset^n}
W^n(y^n,y_1^n|x^n(m),f_1^n(y_1^n),s^n)J^n(s^n|x^n(\tm)) \nonumber\\
&+ \frac{1}{2\dM^2} \sum_{m=1}^\dM \sum_{\tm=1}^\dM \sum_{(y^n,y_1^n):g(y^n)\neq \tm }\sum_{s^n\in\Sset^n}
W^n(y^n,y_1^n|x^n(\tm),f_1^n(y_1^n),s^n)J^n(s^n|x^n(m))
\label{eq:RYzeroCsymm1}
\end{align}
with $W^n\equiv \nRC$ for short notation, where in the last sum we interchanged  the summation indices $m$ and $\tm$. Then, consider the last sum, and observe that by (\ref{eq:symmetrizableGivenX1}), we have that
\begin{align}
\sum_{s^n\in\Sset^n} W^n(y^n,y_1^n|x^n(\tm),f_1^n(y_1^n),s^n)J^n(s^n|x^n(m))
=& \prod_{i=1}^n \left[\sum_{s_i\in\Sset} W(y_i,y_{1,i}|x_i(\tm),f_{1,i}(y_1^{i-1}),s_i) J(s_i|x_i(m))  \right]
\nonumber\\
=& \prod_{i=1}^n \left[\sum_{s_i\in\Sset} W(y_i,y_{1,i}|x_i(m),f_{1,i}(y_1^{i-1}),s_i) J(s_i|x_i(\tm))  \right]
\nonumber\\
=& \sum_{s^n\in\Sset^n} W^n(y^n,y_1^n|x^n(m),f_1^n(y_1^n),s^n)J^n(s^n|x^n(\tm)) \,.
\label{eq:RYzeroCsymm2}
\end{align}
Substituting (\ref{eq:RYzeroCsymm2}) in (\ref{eq:RYzeroCsymm1}), we have 
\begin{align}
\err(q,\code) 
=& \frac{1}{2\dM^2} \sum_{m=1}^\dM \sum_{\tm=1}^\dM \sum_{s^n\in\Sset^n} \bigg[
 \sum_{(y^n,y_1^n):g(y^n)\neq m }
W^n(y^n,y_1^n|x^n(m),f_1^n(y_1^n),s^n)J^n(s^n|x^n(\tm)) \nonumber\\
&+  \sum_{(y^n,y_1^n):g(y^n)\neq \tm } W^n(y^n,y_1^n|x^n(m),f_1^n(y_1^n),s^n)J^n(s^n|x^n(\tm)) \bigg]
\nonumber\\
\geq& \frac{1}{2\dM^2} \sum_{m=1}^\dM \sum_{\tm\neq m} \sum_{s^n\in\Sset^n} 
 \sum_{y^n,y_1^n }
W^n(y^n,y_1^n|x^n(m),f_1^n(y_1^n),s^n)J^n(s^n|x^n(\tm)) \nonumber\\
=& \frac{\dM (\dM-1)}{2\dM^2} \geq \frac{1}{4} \,,
\label{eq:RYzeroCsymm3}
\end{align}
where the last inequality follows from (\ref{eq:RYzeroCmsize2}), hence a positive rate cannot be achieved.
\qed

\section{Proof of Lemma~\ref{lemm:RYzeroCsymmDeg}}
\label{app:RYzeroCsymmDeg}
Let $\avrc=\{W_{Y_1|X,X_1} W_{Y|Y_1,X_1,S} \}$ be a symmetrizable-$\Xset_1\times \Yset_1$ degraded AVRC. Assume to the contrary that  a positive rate $R>0$ can be achieved. That is, for every $\eps>0$ and sufficiently large $n$, there exists a $(2^{nR},n,\eps)$
 code $\code=(f,f_1,g)$. Hence, the size of the message set is at least $2$, \ie
\begin{align}
\label{eq:RYzeroCmsize2d}
\dM\triangleq 2^{nR} \geq 2 \,.
\end{align}  
We now show that there exists a distribution $q(s^n)$ such that the probability of error $\err(q,\code)$ is bounded from below by a positive constant, in contradiction to the assumption above.
By Definition~\ref{def:RYsymmetrizableGivenY1},  there exists a conditional distribution $J(s|x_1,y_1)$ that satisfies (\ref{eq:symmetrizableGivenY1}). Then, consider the following state sequence distribution,
\begin{align}
q(s^n)=\frac{1}{\dM} \sum_{m=1}^\dM \sum_{y_1^n\in\Yset_1} W_{Y_1^n|X^n,X_1^n}(y_1^n|f(m),f_1^n(y_1^n))
 J^n(s^n|f_1^n(y_1^n),y_1^n) \,,
\end{align}
where $J^n(s^n|x_1^n,y_1^n)=\prod_{i=1}^n J(s_i|x_{1,i},y_{1,i})$. For this distribution, the probability of error is given by
\begin{align}
\err(q,\code)
=& \sum_{s^n\in\Sset^n} \left[\frac{1}{\dM} \sum_{\tm=1}^\dM \sum_{\ty_1^n} W_{Y_1^n|X^n,X_1^n}(\ty_1^n|f(\tm),f_1^n(\ty_1^n))
 J^n(s^n|f_1^n(\ty_1^n),\ty_1^n) \right] \nonumber\\ &\times
\frac{1}{\dM} \sum_{m=1}^\dM \sum_{(y^n,y_1^n):g(y^n)\neq m } W_{Y_1^n|X^n,X_1^n}(y_1^n|f(m),f_1^n(y_1^n))
W^n(y^n|y_1^n,f_1^n(y_1^n),s^n) 
 \nonumber\\
=& \frac{1}{2\dM^2} \sum_{m=1}^\dM \sum_{\tm=1}^\dM \sum_{y_1^n,\ty_1^n}W_{Y_1^n|X^n,X_1^n}(\ty_1^n|f(\tm),f_1^n(\ty_1^n))\cdot W_{Y_1^n|X^n,X_1^n}(y_1^n|f(m),f_1^n(y_1^n)) \nonumber\\ &\times
\sum_{y^n:g(y^n)\neq m }\sum_{s^n\in\Sset^n}
W^n(y^n|y_1^n,f_1^n(y_1^n),s^n)J^n(s^n|f_1^n(\ty_1^n),\ty_1^n) \nonumber\\
&+ \frac{1}{2\dM^2} \sum_{m=1}^\dM \sum_{\tm=1}^\dM \sum_{y_1^n,\ty_1^n}  W_{Y_1^n|X^n,X_1^n}(y_1^n|f(m),f_1^n(y_1^n))\cdot W_{Y_1^n|X^n,X_1^n}(\ty_1^n|f(\tm),f_1^n(\ty_1^n)) \nonumber\\ &\times
\sum_{y^n:g(y^n)\neq \tm }\sum_{s^n\in\Sset^n}
W^n(y^n|\ty_1^n,f_1^n(\ty_1^n),s^n)J^n(s^n|f_1^n(y_1^n),y_1^n)
\label{eq:RYzeroCsymm1d}
\end{align}
with $W^n\equiv W_{Y^n|Y_1^n,X_1^n,S^n}$ for short notation, where in the last sum we interchanged  the summation variables $(m,y_1^n)$ and $(\tm,\ty_1^n)$. Then, consider the last sum, and observe that by (\ref{eq:symmetrizableGivenY1}), we have that
\begin{align}
\sum_{s^n\in\Sset^n} W^n(y^n|\ty_1^n,f_1^n(\ty_1^n),s^n)J^n(s^n|f_1^n(y_1^n),y_1^n)
=& \prod_{i=1}^n \left[\sum_{s_i\in\Sset} W(y_i|\ty_{1,i},f_{1,i}(\ty_1^{i-1}),s_i) 
J(s_i|f_{1,i}(y_1^{i-1}),y_{1,i})  \right]
\nonumber\\
=& \prod_{i=1}^n \left[\sum_{s_i\in\Sset} W(y_i|y_{1,i},f_{1,i}(y_1^{i-1}),s_i) 
J(s_i|f_{1,i}(\ty_1^{i-1}),\ty_{1,i})  \right]
\nonumber\\
=& \sum_{s^n\in\Sset^n} W^n(y^n|y_1^n,f_1^n(y_1^n),s^n)J^n(s^n|f_1^n(\ty_1^n),\ty_1^n) \,.
\label{eq:RYzeroCsymm2d}
\end{align}
Substituting (\ref{eq:RYzeroCsymm2d}) in (\ref{eq:RYzeroCsymm1d}), we have 
\begin{align}
&\err(q,\code) 
= \frac{1}{2\dM^2} \sum_{m=1}^\dM \sum_{\tm=1}^\dM  \sum_{y_1^n,\ty_1^n}  W_{Y_1^n|X^n,X_1^n}(\ty_1^n|f(\tm),f_1^n(\ty_1^n)) W_{Y_1^n|X^n,X_1^n}(y_1^n|f(m),f_1^n(y_1^n))
\nonumber\\&\times \sum_{s^n\in\Sset^n} \bigg[
\sum_{y^n:g(y^n)\neq m }
W^n(y^n|y_1^n,f_1^n(y_1^n),s^n)J^n(s^n|f_1^n(\ty_1^n),\ty_1^n) 
+ \sum_{y^n:g(y^n)\neq \tm }
W^n(y^n|y_1^n,f_1^n(y_1^n),s^n)J^n(s^n|f_1^n(\ty_1^n),\ty_1^n) \bigg]
\nonumber\\
\geq& \frac{1}{2\dM^2} \sum_{m=1}^\dM \sum_{\tm\neq m}  \sum_{y_1^n,\ty_1^n}  W^n
(\ty_1^n|f(\tm),f_1^n(\ty_1^n)) W^n
(y_1^n|f(m),f_1^n(y_1^n))
 \sum_{s^n\in\Sset^n} \sum_{y^n\in\Yset^n }
W^n(y^n|y_1^n,f_1^n(y_1^n),s^n)J^n(s^n|f_1^n(\ty_1^n),\ty_1^n) 
\nonumber\\
=& \frac{\dM (\dM-1)}{2\dM^2} \geq \frac{1}{4} \,,
\label{eq:RYzeroCsymm3d}
\end{align}
where the last inequality follows from (\ref{eq:RYzeroCmsize2d}), hence a positive rate cannot be achieved.
\qed

\section{Analysis of Example 
 \ref{example:BSRCsymm}}
\label{app:BSRCsymm}
We show that the random code capacity of the AVRC in Example~\ref{example:BSRCsymm} 
is given by  $\RYrCav=\min\left\{ \frac{1}{2},1-h(\theta)  \right\}$.
As the AVRC is degraded, 
 the random code capacity is given by 
\begin{align}
\RYrCav= \RYdIRavc=\RYrICav=  \max_{p(x,x_1)}   \min \left\{ \min_{0\leq q\leq 1} I_q(X,X_1;Y) \,,\; 
 I(X;Y_1|X_1)
\right\} \,,
\end{align}
due to part 2 of Corollary~\ref{coro:RYavrcDeg}, where $q\equiv q(1)=1-q(0)$. 
Now, consider the direct part.
Set $p(x,x_1)=p(x)p(x_1)$, where
 $X\sim \text{Bernoulli}(\nicefrac{1}{2})$ and $X_1\sim \text{Bernoulli}(\nicefrac{1}{2})$. Then,
\begin{align}
&I(X;Y_1|X_1)
=1-h(\theta) \,, \nonumber\\
&H_q(Y)= \frac{1}{2}\left[ -q \log\left( \frac{1}{2} q \right)-(1-q) \log\left( \frac{1}{2} (1- q) \right) \right] -\frac{1}{2} \log\left( \frac{1}{2}  \right)
=1+\frac{1}{2} h(q)
 \,, \nonumber\\
&H_q(Y|X,X_1)=h(q) \,.
\end{align}
Hence,
\begin{align}
\RYrCav \geq& \min\left\{\min_{0\leq q\leq 1}\left[ 1-\frac{1}{2} h(q)  \right],1-h(\theta)\right\}=
\min\left\{\frac{1}{2},1-h(\theta) \right\} \,.
\intertext{
As for the converse part, we have the following bounds,
}
\RYrCav\leq&  \max_{p(x,x_1)} I(X;Y_1|X_1)=1-h(\theta) \,,
\intertext{and}
\RYrCav\leq&  \max_{p(x,x_1)}    \min_{0\leq q\leq 1} I_q(X,X_1;Y) 
\leq 
 \max_{p(x,x_1)}   [ H_q(Y)-H_q(Y|X,X_1) ] \Big|_{q=\frac{1}{2}}
\nonumber\\
=& \max_{0\leq p\leq 1}     \left[1+\frac{1}{2} h(p) \right]-1=\frac{1}{2} \,,
\end{align}
where $p\triangleq \prob{X_1 =1}$. 
\qed

\section{Proof of Lemma~\ref{lemm:RYmainOrS}}
\label{app:RYmainOrS}
The proof follows the lines of \cite{ElGamalZahedi:05p}.
Consider an AVRC $\avrc$ $=$ $\{  W_{Y|X',X_1}$ $ W_{Y_1|X'',X_1,S} \}$ with orthogonal sender components.
We apply Theorem~\ref{theo:RYmain}, which states that $\RYdIRavc\leq \RYrCav\leq \RYrICav$. 

\subsection{Achievability Proof}
To show achievability,  we set
$U=X''$ and $p(x',x'',x_1)=p(x_1)p(x'|x_1)p(x''|x_1)$ in the partial decode-forward lower bound $\RYdIRavc \triangleq \inR_{DF}(\RYcompound)
\bigg|_{\Qset=\pSpace(\Sset)}$. Hence, by (\ref{eq:RYdIRcompound}), 
\begin{align}
\dRYdIRavc\geq& \max_{p(x_1)p(x'|x_1)p(x''|x_1)} \min \Big\{   I(X',X'',X_1;Y) \,,\; 
\min_{q(s)}
 I_{q} (X'';Y_1|X_1)+  I(X';Y|X_1,X'') 
\Big\} \,.
\label{eq:OrSlowerB1}
\end{align}
Now, by (\ref{eq:OrSrc}), we have that $(X'',Y_1)\Cbar (X',X_1)\Cbar Y$ form a Markov chain.
As $(X_1,X',X'')\sim p(x_1)p(x'|x_1)p(x''|x_1)$, it further follows that $(X'',Y_1)\Cbar X_1\Cbar Y$ form a Markov chain, hence $I(X',X'',X_1;Y)=I(X',X_1;Y)$ and $ I(X';Y|X_1,X'')= I(X';Y|X_1)$.
Thus, (\ref{eq:OrSlowerB1}) reduces to the expression in the RHS of (\ref{eq:RYICavcOrthoS}).
If 
$W_{Y_1|X'',X_1,S}$ is non-symmetrizable-$\Xset''|\Xset_1$,  then (\ref{eq:OrSlowerB1}) is achievable by deterministic codes as well, due to Corollary~\ref{coro:RYmainDbound}.
\qed

\subsection{Converse Proof}
By (\ref{eq:RYICcompound}) and (\ref{eq:RYIRcompoundP}), the cutset upper bound takes the following form,
\begin{align}
\RYrICav
=&\min_{q(s)} \max_{p(x',x'',x_1)} 
\min \big\{ I(X',X'',X_1;Y) \,,\;   I_{q}(X',X'';Y,Y_1|X_1)
\big\} 
\nonumber\\
=& \max_{p(x',x'',x_1)} 
\min \big\{ I(X',X'',X_1;Y) \,,\;  \min_{q(s)} I_{q}(X',X'';Y,Y_1|X_1)
\big\} \,,
\label{eq:RYorsUpperb1}
\end{align}
where the last line is due to the minimax theorem \cite{sion:58p}. 
For the 
AVRC with orthogonal sender components, as specified by (\ref{eq:OrSrc}), we have the following Markov relations,
\begin{align}
&Y_1\Cbar (X'',X_1)\Cbar (X',Y) \,, \label{eq:RYorsY1markov}\\
&(X'',Y_1)\Cbar (X',X_1)\Cbar Y \,. \label{eq:RYorsYmarkov} 
\end{align}
Hence,  by (\ref{eq:RYorsYmarkov}), 
$
I(X',X'',X_1;Y)=I(X',X_1;Y) 
$. 
 As for the second mutual information in the RHS of (\ref{eq:RYorsUpperb1}), by the mutual information chain rule,
\begin{align}
I_{q}(X',X'';Y,Y_1|X_1)=& I_{q}(X'';Y_1|X_1)+I_{q}(X';Y_1|X'',X_1)+I_{q}(X',X'';Y|X_1,Y_1)
\nonumber\\
\stackrel{(a)}{=}& I_{q}(X'';Y_1|X_1)+I_{q}(X',X'';Y|X_1,Y_1)
\nonumber\\
\stackrel{(b)}{=}& I_{q}(X'';Y_1|X_1)+H_{q}(Y|X_1,Y_1)-H(Y|X',X_1)
\nonumber\\
\stackrel{(c)}{\leq}& I_{q}(X'';Y_1|X_1)+I(X';Y|X_1)
\end{align}
where $(a)$ is due to (\ref{eq:RYorsY1markov}), $(b)$ is due to (\ref{eq:RYorsYmarkov}),  and
$(c)$ holds since conditioning reduces entropy. Therefore,
\begin{align}
\RYrICav\leq    \max_{p(x',x'',x_1)} 
\min \left\{ I(X',X_1;Y) \,,\;  \min_{q(s)} I_{q}(X'';Y_1|X_1)+I(X';Y|X_1)
\right\} .
\label{eq:RYorsUpperb2}
\end{align}
Without loss of generality, the maximization in (\ref{eq:RYorsUpperb2}) can be restricted to
distributions of the form $p(x',x'',x_1)=p(x_1)\cdot$ $p(x'|x_1)\cdot$ $p(x''|x_1)$.
\qed

\end{appendices}

\printbibliography
 
\end{document}